\newcommand{\p}{\partial}
\newcommand{\e}{\varepsilon}
\newcommand{\R}{{\mathbb R}}
\newcommand{\T}{{\mathbb T}}
\newcommand{\N}{{\mathbb N}}
\newcommand{\FF}{{\cal F}}
\newcommand{\KK}{{\cal K}}
\newcommand{\PP}{{\cal P}}
\newcommand{\RR}{{\cal R}}
\newcommand{\ph}{\varphi}
\newcommand{\Z}{ \mathbb{Z}}
\newcommand{\dd}{{\textup d}}
\newcommand{\diver}{\mathop{\rm div}\nolimits}
\newcommand{\rot}{\mathop{\rm rot}\nolimits}
\theoremstyle{plain}
\newtheorem{theorem}{Theorem}[section]
\newtheorem{lemma}[theorem]{Lemma}
\newtheorem{proposition}[theorem]{Proposition}
\newtheorem{definition}[theorem]{Definition}
\theoremstyle{remark}
\newtheorem{example}[theorem]{Example}
\numberwithin{equation}{section}
\begin{document}
\author{Hayk Nersisyan}
\date{}
\title{Controllability of 3D incompressible Euler equations by a finite-dimensional external
force}
\date{}

\date{}
 \maketitle
\begin{center}

CNRS (UMR 8088), D\'epartement de Math\'ematiques\\
Universit\'e de Cergy--Pontoise, Site de Saint-Martin\\
            2 avenue Adolphe Chauvin\\
       95302 Cergy--Pontoise Cedex, France\\
       E-mail: Hayk.Nersisyan@u-cergy.fr
 \end{center}

\vspace{15 pt}

 {\small\textbf{Abstract.} In this paper, we study the
control system associated with the incompressible 3D Euler system.
We show that the velocity field and pressure of the fluid  are
exactly controllable in projections by the same finite-dimensional
control. Moreover, the velocity is approximately controllable.
 We also prove that 3D Euler
system is not exactly controllable by a finite-dimensional
external force.
  }\\\\

\section{Introduction}

Let us consider the controlled incompressible 3D Euler system:
\begin{align}
\dot{u}+\langle u,\nabla\rangle u+\nabla p&=h+\eta,\,\,\, \diver
u=0,
\label{Pr.E1:1}\\
u(0,x)&=u_0(x).\label{Pr.E1:2}
\end{align}
where $u=(u_1,u_2,u_3)$ and $p $  are unknown velocity field and
pressure of the fluid, $h$ is a given  function, $u_0$ is an
initial condition, $\eta$ is the control taking values in a
finite-dimensional space $E$, and
\begin{equation}
\langle u,\nabla\rangle
v=\sum_{i=1}^3u_i(t,x)\frac{\p}{\p{x_i}}v.\nonumber
\end{equation}
We assume that space variable $x=(x_1,x_2,x_3)$ belongs to the 3D
torus $\T^3=\R^3/2\pi\Z^3$.

The question of global well-posedness of 3D Euler system continues
to be one of the most challenging problems of fluid mechanics.
However, the local existence of solutions is well known (e.g., see
\cite{tey1, tem}). Moreover, Beale, Kato and A. Majda \cite{BKM}
proved  that under the condition
$$\int_0^T \|\rot
u(t)\|_{L^\infty}\dd t<\infty$$ the smooth  solution  exists up to
time $T$.

In this paper, we show that for an appropriate choice of $E$, the
problem is exactly controllable in projections, i.e., for any
finite-dimensional subspaces $F, G \subset H^k $  and for any
$\hat u \in F, \hat p\in G$ there is an $E$-valued control $\eta$
such that problem (\ref{Pr.E1:1}), (\ref{Pr.E1:2}) has a solution
$(u,p)$   on   $[0,T]$ whose  projection onto $F\times G$
coincides with $(\hat u,\hat p)$ at time $T$. We also prove that
the velocity $u$ is approximately controllable, i.e., $u(T)$ is
arbitrarily close to $\hat u $. From Eq.~(\ref{Pr.E1:1}) it
follows that the pressure can be expressed in terms of the
velocity, so we can not expect to control approximately the
pressure and the velocity simultaneously.  The proofs of these
results are based on a development of some ideas from \cite{agr1,
agr2, shi1,shi2}.

Let us mention some earlier results on the controllability of the
Euler and Navier--Stokes systems. The exact controllability of
Euler and Navier--Stokes systems with  control supported by a
given domain  was studied by Coron  \cite {cor},  Fursikov and
Imanuvilov \cite{fuim},  Glass \cite {gla}, and Fern\'{a}ndez-Cara
 et al. \cite{fpgi}. Agrachev and Sarychev \cite{agr1, agr2} were first to study
controllability properties of some PDE's of fluid dynamics  by
finite-dimensional external force. They proved the controllability
of 2D Navier--Stokes and 2D Euler equations. Rodrigues  \cite{rod}
used Agrachev--Sarychev method  to prove controllability of 2D
Navier--Stokes equation on the rectangle with Lions boundary
condition. Later Shirikyan \cite{shi2} generalized this method to
the case of not well-posed equations. In particular, the
controllability of 3D Navier--Stokes equation is proved.

Notice that the above papers concern the problem of
controllability of the velocity.  In this paper, we first develop
the ideas of these works to get the controllability of the
velocity of 3D Euler system. One of the main difficulties  comes
from the fact that the resolving operator of the system is not
Lipschitz continues  in the phase space.   We next deduce the
controllability of the pressure   from   that of the velocity with
the help of an appropriate correction of the control function.

We also treat the question of exact controllability of 3D Euler
equation. In \cite{shi3}, Shirikyan shows that the set of
attainability $A_T(u_0)$ of 2D Euler equation from initial data
$u_0\in C^s$ at time $T>0$ cannot contain a ball of $C^s$. We show
that the ideas of \cite{shi3} can be generalized
 to prove that the set $A(u_0)=\cup_T A_T(u_0)$  also does not contain a
 ball in 3D  case. In particular, 3D Euler
equation is not exactly controllable.

The paper is organized as follows. In Section \ref{S:1}, we give a
perturbative result for 3D Euler system. In Sections \ref{S:2} and
\ref{S:3}, we formulate the main results of this paper, which are
proved in Sections \ref{S:4} and \ref{S:5}. Section \ref{S:6} is
devoted to the problem of exact controllability.
\\

\textbf{Acknowledgments.}   I want to thank Armen Shirikyan for
many fruitful suggestions and discussions.
\\
\newline

\textbf{Notation.} We set
$$H=\{u\in L^2: \diver u=0,\quad \int_{\T^3}u(x)\dd x=0\}.$$
Let us denote   by $\Pi$ the orthogonal projection  onto $H$
in $L^2$. Let $ H^k$ be the space of vector functions $u = (u_1,
u_2, u_3)$ with components in  the Sobolev space of order $k$, and
let $\|\cdot\|_k$ be the corresponding norm. Define $H^k_\sigma: =
H^k\cap H$. The Stokes operator is denoted  by $L := -\Pi \Delta$,
$D(L)=H_\sigma^2$. For any vector $n=(n_1,n_2,n_3)\in \R^3$ we
denote $|n|:=|n_1|+|n_2|+|n_3|$.

Let $J_T:=[0, T ]$ and $X$ be a Banach  space endowed with the
norm $\|\cdot\|_X$. For $1\leq p<\infty$  let $L^p(J_T,X)$ be the
space of measurable functions $u: J_T \rightarrow X$ such that
\begin{equation}
\|u\|_{L^p(J_T,X)}:=\bigg(\int_{0}^T \|u\|_X^p\dd s
\bigg)^{\frac{1}{p}}<\infty.\nonumber
\end{equation}
   The space of continuous functions $u: J_T
\rightarrow X$ is denoted by $C(J_T,X)$.

\section{Perturbative result on solvability of the 3D Euler system}\label{S:1}
Let us consider the Cauchy problem for Euler system on the 3D
torus:
\begin{align}
\dot{u}+\langle u,\nabla\rangle u+\nabla p&=f(t),\,\,\, \diver
u=0,
\label{E1:eul}\\
u(0,x)&=u_0(x).\label{E1:eulic}
\end{align}
 System (\ref{E1:eul}),
(\ref{E1:eulic}) is equivalent to the  problem (see \cite[Chapter
17]{tey1})
\begin{align}
\dot{v}+B (v)&=\Pi f(t),\,\,\,\nonumber\\
v(0,x)&=\Pi u_0(x),\nonumber
\end{align}
where $v=\Pi u$, $B(a,b)=\Pi\{\langle a,\nabla\rangle b\}$ and
$B(a)=B(a,a)$. We shall need the following standard estimates for
the bilinear form $B$:
\begin{align}
\|B(a,b)\|_k\leq C\|a\|_k\|b\|_{k+1}  \qquad &\text{for  }
k\ge 2\label{E1:B1},\\
|(B(a,b),L^k b)|\leq C\|a\|_k\|b\|^2_{k}  \qquad &\text{for  }
k\ge 3,\label{E1:B2}
\end{align}
for any $a\in H_\sigma^k$ and $ b \in H_\sigma^{k+1}$ (see
\cite{CF88}).

Let us consider the problem
\begin{align}
\dot{u}+B(u+\zeta)&=f(t),
\label{E1:pert2}\\
u(0,x)&=u_0(x).\label{E1:ic2}
\end{align}
\begin{theorem}\label{T:pert}
 Let $T>0$ and $k\ge 4$. Suppose that for some functions  $v_0\in H^{k}_\sigma$,   $\xi\in L^2
(J_T,H^{k+1}_\sigma)$ and $g\in L^1(J_T,H^{k}_\sigma)$ problem
(\ref{E1:pert2}), (\ref{E1:ic2}) with $u_0=v_0$, $\zeta=\xi$ and
$f=g$ has a solution $v\in C(J_T,H^{k}_\sigma)$. Then there are
positive constants $\delta$ and $C$ depending only on the quantity
\begin{equation}
 \|v\|_{C(J_T,H^{k})}+\|\xi\|_{L^2(J_T,H^{k})}
   \nonumber
\end{equation}
such that the following statements hold.

\begin{enumerate}
\item[(i)] If $u_0\in H^{k}_\sigma,$ $\zeta\in L^2(J_T,H^{k+1}_\sigma)$ and $f\in L^1(J_T,H^{k}_\sigma)$  satisfy the
inequalities
\begin{equation}\label{E1:delt2}
\|v_0-u_0\|_{k-1}< \delta, \quad \|\zeta-\xi\|_{L^2(J_T,H^{k})}< \delta, \quad \|f-g\|_{L^1(J_T,H^{k-1})}<
\delta,
\end{equation}
then problem (\ref{E1:pert2}), (\ref{E1:ic2}) has a unique
solution $ u\in C(J_T,H^{k}_\sigma).$
\item[(ii)] Let $$\RR:H^{k}_\sigma\times L^2 (J_T,H^{k+1}_\sigma)\times L^1(J_T,H^{k}_\sigma)\rightarrow C(J_T,H^{k}_\sigma)$$
be the operator that takes each triple $(u_0,\zeta,f )$ satisfying
(\ref{E1:delt2}) to the solution $u$ of (\ref{E1:pert2}),
(\ref{E1:ic2}). Then
\begin{align}
&\|\RR(u_0,\zeta,f )-\RR(v_0,\xi,g )\|_{C(J_T,H^{{k-1}})}\le
C\big( \|v_0-u_0\|_{k-1}\nonumber\\&+
\|\zeta-\xi\|_{L^2(J_T,H^{k})}+\|f-g\|_{L^1(J_T,H^{k-1})}\big).\nonumber
\end{align}
\item[(iii)] Let $\zeta\in C(J_{T},H^k)$ and $f\in C(J_{T},H^{k-1})$, and let $\RR_t$ be the restriction of $\RR$ to the time $t$. Then $\RR_\cdot$ is Lipschitz-continuous in time, i.e.,
\begin{align}
&\|\RR_t(u_0,\zeta,f )-\RR_s(u_0,\zeta,f )\|_{k-1}\le
M|t-s|,\nonumber
\end{align}
where $M$ depends on  $\|\RR(u_0,\zeta,f )\|_{C(J_T,H^{k})},
\|\zeta\|_{C(J_T,H^{k})}$ and $\|f\|_{C(J_T,H^{k-1})}$.
\end{enumerate}
\end{theorem}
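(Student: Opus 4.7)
The natural approach is to analyze the difference $w = u - v$. With $V = v+\xi$, $W = w+\omega$, $\omega = \zeta-\xi$, $\phi = f-g$, and $\psi = u_0-v_0$, subtracting the two equations and expanding the bilinear form yields
$$
\dot w + B(V,W) + B(W,V) + B(W,W) = \phi, \qquad w(0) = \psi.
$$
All three conclusions are driven by a single a priori bound for $\|w\|_{k-1}$; the loss of one derivative relative to the natural regularity of $v$ is what forces the hypothesis $k\ge 4$, so that estimate (\ref{E1:B2}) is applicable at level $k-1\ge 3$.

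For part (ii), I would take the $L^2$ inner product of the equation for $w$ with $L^{k-1}w$ and classify the terms by which slot of $B$ contains $w$. Contributions in which $w$ sits in the second slot, namely $(B(V,w),L^{k-1}w)$ and $(B(\omega,w),L^{k-1}w)$, are handled directly by (\ref{E1:B2}), giving $C(\|V\|_{k-1}+\|\omega\|_{k-1})\|w\|_{k-1}^2$. The remaining terms, together with all $\omega$-dependent source contributions, are controlled through (\ref{E1:B1}) and Cauchy--Schwarz via $\|B(a,b)\|_{k-1}\le C\|a\|_{k-1}\|b\|_k$. The cubic self-interaction $|(B(w,w),L^{k-1}w)| \le C\|w\|_{k-1}^3$ is absorbed by a smallness bootstrap. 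The outcome is a differential inequality of the form
$$
\tfrac12\tfrac{d}{dt}\|w\|_{k-1}^2 \le \alpha(t)\,\|w\|_{k-1}^2 + \beta(t)\,\|w\|_{k-1} + C\|w\|_{k-1}^3,
$$
with $\alpha\in L^1(J_T)$ depending only on $\|v\|_{C(J_T,H^k)}$ and $\|\xi\|_{L^2(J_T,H^k)}$, and $\|\beta\|_{L^1(J_T)} \le C(\|\omega\|_{L^2(J_T,H^k)} + \|\phi\|_{L^1(J_T,H^{k-1})})$. Gronwall then yields the Lipschitz estimate of (ii) as soon as $\delta$ is small enough to keep the cubic term dominated.

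For existence in (i), local existence of $u\in C([0,T_\ast],H^k_\sigma)$ follows from a standard Picard iteration applied to $\dot u + B(u+\zeta) = f$, the hypothesis $\zeta\in L^2(J_T,H^{k+1})$ being precisely what puts the source $B(\zeta)$ in $L^1(J_T,H^k)$ via (\ref{E1:B1}). The solution extends to all of $J_T$ by a Beale--Kato--Majda-type continuation criterion for the forced equation: $H^k$-regularity persists as long as $\int_0^T \|\nabla u\|_{L^\infty}\dd t<\infty$. Since $k-1>5/2$, Sobolev embedding gives $\|\nabla u\|_{L^\infty} \le C\|u\|_{k-1} \le C(\|v\|_{k-1}+\|w\|_{k-1})$, and the uniform $H^{k-1}$-bound on $w$ from step (ii) closes the loop; uniqueness comes from the same energy estimate applied to the difference of two solutions. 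Part (iii) is then a direct consequence of the equation: by (\ref{E1:B1}) at level $k-1$, $\|\dot u\|_{k-1} \le \|f\|_{k-1} + C\|u+\zeta\|_{k-1}\|u+\zeta\|_k \le M$, so $\|u(t)-u(s)\|_{k-1} \le M|t-s|$.

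The main technical obstacle is the one-derivative gap: the energy estimate for $w$ has to live in $H^{k-1}$, since performing it at the natural level $H^k$ would call for $\|v\|_{k+1}$, which the hypotheses do not provide. The compensating mechanism is the Beale--Kato--Majda criterion, which in 3D needs only $H^{k-1}$-control of $u$ (via $H^{k-1}\hookrightarrow W^{1,\infty}$) to prevent blow-up in $H^k$. Making this trade-off quantitative, so that $\delta$ and $C$ end up depending only on the announced quantity $\|v\|_{C(J_T,H^k)}+\|\xi\|_{L^2(J_T,H^k)}$, is the delicate bookkeeping step.
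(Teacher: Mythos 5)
Your proposal is correct and follows essentially the same route as the paper: the same decomposition $u=v+w$, the same $H^{k-1}$ energy estimate for $w$ via pairing with $L^{k-1}w$ and the estimates (\ref{E1:B1}), (\ref{E1:B2}) followed by Gronwall with a smallness bootstrap for the cubic term, the same Beale--Kato--Majda continuation to recover $C(J_T,H^k)$ regularity from the $H^{k-1}$ control, and the same direct integration of the equation for part (iii).
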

\begin{proof}
We seek  a solution of (\ref{E1:pert2}), (\ref{E1:ic2}) in the
form $u=v+w$. Substituting this into (\ref{E1:pert2}),
(\ref{E1:ic2}) and performing some transformations, we obtain the
following problem for $w$:
\begin{align}
&\dot{w}+B(w+\eta,v+\xi)+B(v+\xi,w+\eta)
+B(w+\eta)=q(t,x),\label{E1:pert151}\\
&w(0,x)=w_0(x),\label{E1:eqwin}
\end{align}
where   $w_0=u_0-v_0$, $\eta=\zeta-\xi$ and $q=f-g$.
 By bilinearity
 of $B$,  (\ref{E1:pert151}) is equivalent to the   equation
\begin{align}
&\dot{w}+B(w)+\tilde{B}(w,\eta)+\tilde{B}(w,v)+\tilde{B}(w,\xi)=q(t,x)-(B(\eta)+\tilde{B}(v,\eta)+\tilde{B}(\xi,\eta)),\label{E1:pert5}
\end{align}
where $\tilde B(u,v)=B(u,v)+B(v,u) $. It follows from   (\ref{E1:delt2}) that we can choose  $\delta
>0$ such that the right-hand side of
(\ref{E1:pert5}) and initial data $w_0$ are    small in
$L^1(J_T,H^{k-1}_\sigma)$ and $H^{k-1}_\sigma$, respectively.
Hence, by the standard theorem of existence (see \cite{tey1},
\cite{tem}), system (\ref{E1:pert5}), (\ref{E1:eqwin}) has a
unique  solution $w\in C(J_T,H^{k-1}_\sigma)$. From the embedding
$H^2_\sigma \hookrightarrow L^\infty$ we deduce that
\begin{align}\label{E:BKMLinf}
\sup_{t\in[0,T]}\|\rot u(t,\cdot)\|_{L^\infty}<\infty.
\end{align}
In view of   $u_0\in H^{k}_\sigma$, $\zeta\in
L^2(J_T,H^{k+1}_\sigma)$, $f\in L^1(J_T,H^{k}_\sigma)$ and
(\ref{E:BKMLinf}), the Beale--Kato--Majda theorem
  (see \cite{BKM}) implies $u\in C(J_T, H^{k}_\sigma)$.

To prove $(ii)$, let us get an a priori
estimate for $w$. Multiplying  (\ref{E1:pert5}) by $L^{{k-1}}w$ and using
(\ref{E1:B1}), (\ref{E1:B2}), we obtain
\begin{align}
\frac{1}{2}\frac{d}{dt}\|w\|^2_{{k-1}}&\leq
C\bigg(\|w\|^3_{{k-1}}+ \|w\|_{{k-1}}^2\big(
\|\eta\|_{k}+\|v\|_{k} +\|\xi\|_{k}\big)\nonumber\\
&\quad+\|w\|_{{k-1}}\big( \|q\|_{k-1}+
\|\eta\|_{k}(\|\eta\|_{k-1}+\|v\|_{k}+\|\xi\|_{k}) \big)\bigg)
.\label{E1:14}
\end{align}
Integrating (\ref{E1:14}), we obtain
\begin{align}
 \|w\|^2_{C(J_t,H^{k-1})}& \le C\|w\| _{C(J_t,H^{k-1})}\nonumber\\&\quad\times\bigg[A+\int _0^t \bigg(\|w\|^2_{{k-1}}+
\|w\|_{{k-1}}\big( \|\eta\|_{k}+\|v\|_{k}+\|\xi\|_{k}
\big)\bigg)\dd t
    \bigg],\label{E1:15}
\end{align}
where
 $A=\|w_0\|_{k-1}+\int_0^T\big [
\|q\|_{k-1}+ \|\eta\|_{k}(\|\eta\|_{k-1}+\|v\|_{k}+\|\xi\|_{k})
\big]
 \dd s$.
Dividing (\ref{E1:15}) by $ \|w\| _{C(J_t,H^{k-1})} $ and using
the Gronwall inequality, we get
\begin{equation}
\|w\|_{{k-1}}\leq A_1+ C_1\int_0^t \|w(s,\cdot)\|_{{k-1}}^2\dd s
,\nonumber
\end{equation}
where $A_1=C_1A$ and $C_1$ is a constant depending on
$\|v\|_{C(J_T,H^{k})}+\|\xi\|_{L^2(J_T,H^{k})}$.
 Another application of Gronwall inequality  gives that
\begin{align}\label{E1:himnw1}
\|w(s)\|_{{k-1}}\le \frac{A_1}{1-C_1A_1t}\le 2A_1 \,\,\,\text{for
any }\,t\le\frac{1}{2C_1A_1}.
\end{align}
We can choose $\delta>0$ such that $\frac{1}{2C_1A_1}\geq T$. From
the definition of $A_1$ and  (\ref{E1:himnw1}) we deduce that
\begin{align}\label{E1:himnw}
\|w\|_{C(J_T,H^{k-1})}\le C
(\|w_0\|_{H^{k-1}}+\|\eta\|_{L^2(J_T,H^{k})}+\|q\|_{L^1(J_T,H^{k-1})}).
\end{align}
 Statement $(ii)$  is a straightforward consequence of (\ref{E1:himnw}).

Let us prove $(iii)$. Integrating (\ref{E1:pert2}) over $(s,t)$
and using (\ref{E1:B1}), we  get
\begin{align}
\|u(t)-u(s)\|_{k-1}\le
\int_s^t\|f(\tau)-B(u(\tau)+\zeta(\tau))\|_{k-1}\dd \tau\le
M|t-s|.\nonumber
\end{align}
This completes the proof of Theorem \ref{T:pert}.
\end{proof}

\section{Controllability of the velocity }\label{S:2}
 Let us consider the controlled Euler system:
\begin{align}
\dot{u}+B (u)&=h(t)+\eta(t),\,\,\,
\label{E1.eq}\\
u(0,x)&=u_0(x),\label{E1.eulic}
\end{align}
where $h\in C^\infty([0,\infty),H^{k+2}_\sigma)$ and $u_0\in
H^k_\sigma$ are given functions, and $\eta$ is the control taking
values in a finite-dimensional subspace $E\subset H^{k+2}_\sigma$.
We denote by $\Theta(h,u_0)$ the set of functions $\eta\in
L^1(J_T,H^k_\sigma)$ for which (\ref{E1.eq}), (\ref{E1.eulic}) has
a unique solution in $C(J_T, H^k_\sigma)$. By Theorem
\ref{T:pert}, $\Theta(h,u_0)$ is an open subset of
$L^1(J_T,H^k_\sigma)$. To simplify the notation, we write
$\RR(\cdot,0,\cdot)=\RR(\cdot,\cdot)$. Let us recall the
definition of  controllability. Suppose $X\subset
L^1(J_T,H^k_\sigma)$ is an arbitrary vector space.
\begin{definition}\label{D.3.1} Eq.~(\ref{E1.eq}) with $\eta \in X$ is said to be controllable
at time $T$ if for any $\e>0$, for any finite-dimensional subspace
$F \subset H^k_\sigma$, for any projection $P_F:H^k_\sigma
\rightarrow H^k_\sigma $ onto $F$ and  for any functions $u_0\in
H^{k}_\sigma$, $\hat{u}\in H^{k}_\sigma$ there is a control $\eta
\in \Theta(h,u_0)\cap X $ such that
\begin{align}
P_F\RR_T(u_0,\eta)&=P_F\hat{u},\nonumber\\
\|\RR_T(u_0,\eta)-\hat{u}\|_{k}&<\e.\nonumber
\end{align}
\end{definition}
 Let us recall some notation introduced in \cite{agr1},
\cite{agr2} and  \cite{shi1}. For any finite-dimensional subspace
$E\subset H_\sigma^{k+2}$, we denote by $\FF(E)$ the largest
vector space $\FF\subset H^{k+2}_\sigma$ such that for any
$\eta_1\in F$ there are vectors $\eta, \zeta^1,\ldots ,\zeta^n\in
E$ and positive constants $ \alpha_1,\ldots,\alpha_n$ satisfying
the relation
\begin{eqnarray}\label{E2:sahmnf}
\eta_1=\eta-\sum_{i=1}^n\alpha_iB(\zeta^i).
\end{eqnarray}
The space $\FF(E)$ is well defined. Indeed, as $E$ is a
finite-dimensional subspace and $B$ is a bilinear operator, then
$\FF(E)$ is contained in a finite-dimensional space.  It is easy
to see that if subspaces $G_1$ and $G_2 $ satisfy
(\ref{E2:sahmnf}), then so does $G_1 + G_2$. Thus, $\FF(E)$ is
well defined. Obviously, $E\subset\FF(E)$. We define $E_k$ by the
rule
\begin{eqnarray}\
E_0=E,\quad E_n=\FF(E_{n-1})\quad \textrm{for} \quad n \geq
1,\quad E_\infty=\bigcup_{n=1}^\infty E_n.\nonumber
\end{eqnarray}
The following theorem is the main result of this section.
\begin{theorem}\label{T.2.1}
Let $h\in C^\infty([0,\infty),H^{k+2}_\sigma)$. If $E\subset
H^{k+2}_\sigma$ is a finite-dimensional subspace such that
$E_\infty$ is dense in $H^k_\sigma$, then  Eq.~(\ref{E1.eq}) with
$\eta\in C^\infty(J_T,E)$ is controllable at any time $T$.
\end{theorem}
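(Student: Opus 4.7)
The plan is to follow the Agrachev--Sarychev saturation scheme, as developed for fluid dynamics PDEs in \cite{agr1, agr2, shi1, shi2}. The proof splits naturally into three ingredients: (a) a preliminary step establishing controllability when the control lies in the full space $L^1(J_T, H^k_\sigma)$, (b) an approximation step reducing from $H^k_\sigma$-valued controls to $E_\infty$-valued controls via density, and (c) an \emph{extension/reduction lemma} that iteratively replaces controls in $E_{n+1}$ by controls in $E_n$. Combined, these allow us to reach any desired final state exactly in a prescribed finite-dimensional projection while maintaining $\e$-approximation in $H^k$.

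For (a), given target $\hat u$ and initial data $u_0$ in $H^k_\sigma$, I would pick any smooth curve $\gamma \in C^\infty(J_T, H^{k+2}_\sigma)$ with $\gamma(0) = u_0$ and $\gamma(T) = \hat u$; setting $\eta := \dot\gamma + B(\gamma) - h$ yields a smooth $H^{k}_\sigma$-valued control for which $\RR_T(u_0, \eta) = \hat u$ tautologically. For (b), $\eta(t)$ is approximated in $L^1(J_T, H^{k-1})$ by elements of $E_\infty = \bigcup_n E_n$, so that each approximant lies in $E_n$ for some finite $n$; the continuity statement of Theorem~\ref{T:pert}(ii) then turns a small perturbation of the control into a small perturbation of the final state in $H^{k-1}$, and smoothness in time can be arranged by convolution.

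The central step is (c), the reduction from $E_{n+1}$ to $E_n$. Every $\eta_1 \in E_{n+1} = \FF(E_n)$ decomposes as $\eta_1 = \eta - \sum_{i=1}^N \alpha_i B(\zeta^i)$ with $\eta, \zeta^i \in E_n$. The Agrachev--Sarychev idea is to realize the quadratic corrections $-\alpha_i B(\zeta^i)$ as the nonlinear self-interaction of high-frequency oscillations of the control. Concretely, for a small parameter $\delta > 0$, one perturbs the control by a rapidly oscillating $E_n$-valued term of amplitude $\delta^{-1/2}$ built from the $\zeta^i$, whose contribution through $B(u)$ averages, as $\delta \to 0$, to $-\sum \alpha_i B(\zeta^i)$. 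The error on the final state is estimated in $H^{k-1}$ via Theorem~\ref{T:pert}(ii), after writing the perturbed equation in the form (\ref{E1:pert2}) with $\zeta$ absorbing the oscillatory piece. Iterating this procedure $n$ times brings a control in $E_n$ (produced in (b)) down to a control in $E_0 = E$.

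To conclude, I would upgrade the resulting approximate controllability in $H^{k-1}$ (and, by a commutator/higher-norm argument as in \cite{shi2}, in $H^k$) to exact controllability in projection onto $F$. The standard device is a finite-dimensional topological argument: composing $P_F$ with $\RR_T(u_0, \cdot)$ on a suitable finite-dimensional ball of corrections to the control yields a continuous map that, by approximate controllability, is a small perturbation of a surjection onto a neighbourhood of $P_F \hat u$ in $F$; a Brouwer fixed-point (or topological degree) argument then produces $\eta \in C^\infty(J_T,E)$ with $P_F\RR_T(u_0,\eta) = P_F\hat u$ and $\|\RR_T(u_0,\eta) - \hat u\|_k < \e$. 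The main obstacle I anticipate is the rigorous execution of step (c): the oscillation argument must be controlled in $H^{k-1}$ because the Euler resolving operator fails to be Lipschitz in the phase space (as flagged in the introduction), and Theorem~\ref{T:pert}, which loses exactly one derivative, is tailored precisely to make this reduction work.
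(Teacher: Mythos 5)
Your proposal follows essentially the same route as the paper: reduce controllability to uniform approximate controllability plus a Brouwer fixed-point argument (Lemma~\ref{L1:strong}), first achieve approximate controllability with unrestricted controls via an explicit connecting curve mollified by the heat semigroup, project the resulting control onto some $E_N$ by density of $E_\infty$, and then iterate an Agrachev--Sarychev reduction from $E_{n+1}$ to $E_n$ (Theorem~\ref{T.reduct}) carried out by convexification and fast time-oscillation of the control, with all error estimates performed one derivative down via Theorem~\ref{T:pert}(ii). The only cosmetic differences are that the paper implements the oscillation step through the auxiliary two-control system (\ref{E2.conv}) and the substitution $\eta-\dot\zeta_n$ rather than amplitude-$\delta^{-1/2}$ oscillations, and that your step (a) must replace $u_0$ and $\hat u$ by $e^{\delta L}u_0$ and $e^{\mu L}\hat u$ (accepting approximate rather than exact attainment there), since a smooth curve with values in $H^{k+2}_\sigma$ cannot start at a general $u_0\in H^k_\sigma$.
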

\begin{example}
 Let us introduce the functions
\begin{eqnarray}\label{E.cmsmer}
c_m(x) = l(m) \cos\langle m, x \rangle, \,s_m(x) = l(m)
\sin\langle m, x \rangle,
\end{eqnarray}
where  $m \in \Z^3$ and
\begin{eqnarray}\label{E.cmsmer2}
\{l(m),l(-m)\}\quad \text{is an orthonormal basis in
$m^\bot:=\{x\in \R^3, \langle x, m \rangle=0$\}}.
\end{eqnarray}It is shown in \cite {shi1} that if
$$E=span \{c_m,s_m ,\,\,|m| \leq 3\},$$
 then $E_\infty $ is dense in $H^k_\sigma$. We emphasise for what follows that the space $E$ does not depend on the choice of the basis $\{l(m),l(-m)\}$.
\end{example}
The proof of Theorem \ref{T.2.1} is based on the  uniform
approximate controllability of the Euler system.

\begin{definition} Eq.~(\ref{E1.eq}) with $\eta \in X$ is said to be uniformly
approximately controllable at time $T$ if for any  $\e>0$, any
$u_0\in H^{k}_\sigma$  and any compact set $K\subset H^{k}_\sigma$
there is a continuous function $\Psi:K\rightarrow \Theta(h,u_0)
\cap X$ such that
\begin{eqnarray}\label{E2.uniapcon}
\sup_{\hat{u}\in K} \|\RR_T(u_0,
\Psi(\hat{u}))-\hat{u}\|_k<\e,
\end{eqnarray}
where  $ \Theta(h,u_0) \cap X$ is endowed with the  norm of $
L^1(J_T,H^k)$.
\end{definition}
\begin{lemma}\label{R.1}
If for any compact subset $K\subset H^{k+1}_\sigma$  there is a
continuous function $\Psi:K\rightarrow \Theta(h,u_0) \cap X$ such
that (\ref{E2.uniapcon}) holds, then Eq.~(\ref{E1.eq}) with $\eta
\in X$ is
 uniformly approximately controllable at time $T$.
\end{lemma}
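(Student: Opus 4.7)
The plan is to reduce the general case (where $K\subset H^k_\sigma$ is merely compact in $H^k_\sigma$) to the hypothesis (where $K$ is compact in the smoother space $H^{k+1}_\sigma$) by pre-composing with a Fourier-truncation smoothing operator. On the torus the natural choice is $S_N u := \sum_{|m|\le N}\hat u(m)\,e^{i\langle m,x\rangle}$ for $u\in H^k_\sigma$. Because its range is finite-dimensional, $S_N$ is bounded as a map $H^k_\sigma\to H^\ell_\sigma$ for every $\ell\ge k$ (with norm $\le (1+N^2)^{(\ell-k)/2}$), so in particular $S_N:H^k_\sigma\to H^{k+1}_\sigma$ is continuous. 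Parseval gives $\|S_Nu-u\|_k\to 0$ pointwise in $H^k_\sigma$, together with $\|S_N\|_{H^k_\sigma\to H^k_\sigma}\le 1$.

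Given $\e>0$ and $K\subset H^k_\sigma$ compact, the first step is to upgrade this pointwise convergence to uniform convergence on $K$ by a standard $\e$-net argument: cover $K$ by finitely many $(\e/4)$-balls centered at $u_1,\dots,u_n$, use pointwise convergence at the centers to choose $N$ large enough that $\max_j\|S_Nu_j-u_j\|_k<\e/4$, and for general $u\in K$ write $\|S_Nu-u\|_k\le \|S_N(u-u_j)\|_k+\|S_Nu_j-u_j\|_k+\|u-u_j\|_k\le 2\|u-u_j\|_k+\e/4<\e/2$. Thus one can fix $N$ with $\sup_{u\in K}\|S_Nu-u\|_k<\e/2$.

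Next set $\tilde K:=S_N(K)$. By continuity of $S_N$ from $H^k_\sigma$ to $H^{k+1}_\sigma$, $\tilde K$ is a compact subset of $H^{k+1}_\sigma$, so the hypothesis provides a continuous map $\tilde\Psi:\tilde K\to \Theta(h,u_0)\cap X$ with $\sup_{v\in\tilde K}\|\RR_T(u_0,\tilde\Psi(v))-v\|_k<\e/2$. Define
\[
\Psi := \tilde\Psi\circ S_N\big|_K : K\longrightarrow \Theta(h,u_0)\cap X,
\]
which is continuous as a composition of continuous maps. The triangle inequality then gives
\[
\|\RR_T(u_0,\Psi(\hat u))-\hat u\|_k
\le \|\RR_T(u_0,\tilde\Psi(S_N\hat u))-S_N\hat u\|_k + \|S_N\hat u-\hat u\|_k < \e
\]
uniformly for $\hat u\in K$, which is exactly the definition of uniform approximate controllability.

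The only non-trivial step is the uniform convergence $S_Nu\to u$ on compact subsets of $H^k_\sigma$, and this is handled by the equicontinuity-style $\e$-net argument together with the uniform bound $\|S_N\|_{H^k_\sigma\to H^k_\sigma}\le 1$; everything else is composition and the triangle inequality. Crucially, the fact that $S_N$ lands in $H^{k+1}_\sigma$ (indeed in every $H^\ell_\sigma$) is what lets us apply the smoother hypothesis at all.
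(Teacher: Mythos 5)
Your argument is correct and structurally identical to the paper's: the paper smooths the targets with the heat semigroup $e^{-\delta L}$ where you use the Fourier truncation $S_N$, and both proofs then apply the $H^{k+1}_\sigma$-hypothesis to the smoothed compact set and conclude by the triangle inequality (your $\e$-net argument makes explicit the uniform convergence on $K$ that the paper merely asserts). The only blemish is an arithmetic slip: with $\e/4$-balls you get $2\|u-u_j\|_k+\e/4<3\e/4$ rather than $\e/2$, so your final estimate only yields $5\e/4$; shrinking the covering radius to $\e/8$ (or reallocating the $\e$-budget) fixes this trivially.
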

\begin{proof}
For any compact set $K\subset H^{k}_\sigma$ there is a small
constant $\delta>0$ such that
\begin{eqnarray}
\sup_{\hat u \in K} \|e^{-\delta L}\hat u- \hat
u\|_k<\frac{\e}{2} \nonumber.
\end{eqnarray}
 As $K_1:=e^{-\delta L}K$ is compact  in  $H^{k+1}_\sigma$, by assumption,  there  is a continuous mapping
$\Psi :K_1 \rightarrow \Theta(h,u_0)\cap C^\infty (J_T,X)$ such
that $$ \sup_{\hat u \in K_1} \|\RR_T(u_0, \Psi(\hat u))-\hat
u\|_k<\frac{\e}{2}.$$ Therefore the continuous mapping
$\Phi:K\rightarrow \Theta(h,u_0)\cap X, $ $\hat u \rightarrow
\Psi(e^{-\delta L}\hat u)$ satisfies the inequality
\begin{eqnarray}
\sup_{\hat u \in K} \|\RR_T(u_0, \Phi (\hat u))-\hat u\|_k<\e.\nonumber
\end{eqnarray}
\end{proof}
The following lemma shows that the uniform approximate
controllability is stronger than controllability.
\begin{lemma}\label{L1:strong}
If Eq.~(\ref{E1.eq}) with $\eta\in X$ is uniformly approximately
controllable at time $T$, then it is also controllable.
\end{lemma}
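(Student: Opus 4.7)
The plan is to reduce controllability to a finite-dimensional fixed-point problem solvable by Brouwer's theorem. Fix $\hat u \in H^k_\sigma$, a projection $P_F:H^k_\sigma\to H^k_\sigma$ onto a finite-dimensional subspace $F\subset H^k_\sigma$, and $\e>0$. Write $c_F$ for the operator norm of $P_F$ on $H^k_\sigma$. Set $r:=\e/2$ and pick $\e'>0$ so small that $c_F\e'<r$ and $\e'+r<\e$. Let $\overline{B_r^F}$ denote the closed ball of radius $r$ around $0$ in $F$, and put
\[
K:=\hat u+\overline{B_r^F}=\{\hat u+w:w\in F,\ \|w\|_k\le r\},
\]
a compact convex subset of $H^k_\sigma$. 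Uniform approximate controllability applied to $K$ with precision $\e'$ produces a continuous map $\Psi:K\to\Theta(h,u_0)\cap X$ such that $\|\RR_T(u_0,\Psi(\hat v))-\hat v\|_k<\e'$ for every $\hat v\in K$.

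Define $\Phi:\overline{B_r^F}\to F$ by
\[
\Phi(w):=w+P_F\hat u-P_F\RR_T(u_0,\Psi(\hat u+w)).
\]
Since $P_Fw=w$ for $w\in F$, one rewrites
\[
\Phi(w)=-P_F\bigl[\RR_T(u_0,\Psi(\hat u+w))-(\hat u+w)\bigr],
\]
whence $\|\Phi(w)\|_k\le c_F\e'<r$, so $\Phi$ sends the compact convex set $\overline{B_r^F}$ into itself. The continuity of $\Phi$ is the delicate point: if $w_n\to w$ in $F$, then $\Psi(\hat u+w_n)\to\Psi(\hat u+w)$ in $L^1(J_T,H^k_\sigma)$ by continuity of $\Psi$, and Theorem~\ref{T:pert}(ii) (with $\zeta=\xi=0$) provides convergence of the resolved values in $H^{k-1}_\sigma$. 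Coupled with the a priori $H^k_\sigma$-bound on $\{\RR_T(u_0,\Psi(\hat u+w_n))\}$ supplied by Theorem~\ref{T:pert}(i), this upgrades to weak convergence in $H^k_\sigma$, which composed with the bounded operator $P_F$ having finite-dimensional range becomes norm convergence in $F$.

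By Brouwer's fixed-point theorem, $\Phi$ has a fixed point $w^*\in\overline{B_r^F}$. Unpacking $\Phi(w^*)=w^*$ gives $P_F\RR_T(u_0,\Psi(\hat u+w^*))=P_F\hat u$, so the control $\eta:=\Psi(\hat u+w^*)\in\Theta(h,u_0)\cap X$ realizes the exact projection condition. The triangle inequality
\[
\|\RR_T(u_0,\eta)-\hat u\|_k\le\|\RR_T(u_0,\eta)-(\hat u+w^*)\|_k+\|w^*\|_k<\e'+r<\e
\]
completes the verification of Definition~\ref{D.3.1}. The main obstacle is precisely the continuity of $\Phi$: because $\RR_T$ is only Lipschitz in the weaker $H^{k-1}$ norm (Theorem~\ref{T:pert}(ii)), the weak-to-strong conversion via the $H^k$ bound and the finite dimensionality of $F$ is essential.
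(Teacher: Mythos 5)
Your proof is correct and follows essentially the same route as the paper: both reduce the exact-projection requirement of Definition~\ref{D.3.1} to a Brouwer fixed-point problem on a compact ball in $F$, using the uniform approximate controllability estimate to bound the error. The only differences are cosmetic --- you center the compact set at $\hat u$ and use a direct self-map fixed point instead of the paper's surjectivity statement $B_F(R-M\e)\subset \Phi(B_F(R))$, and you additionally spell out the continuity of $\Phi$ (via the $H^{k-1}$-Lipschitz bound of Theorem~\ref{T:pert}(ii) plus a uniform $H^k$ bound and finite-dimensionality of $F$), a point the paper's proof takes for granted.
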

\begin{proof}
Suppose $F\subset H^k_\sigma$ is a finite-dimensional subspace
 and  $P_F$ is a projection onto $F$, $u_0\in H^{k}_\sigma$ and  $\hat{u}\in
 F$.
Let $B_F(R)$ be the closed ball in $F$ of radius $R$ centred at
origin with $R>M\e$, where $M$ is the norm of $P_F$ and $\e>0$ is
an arbitrary constant. Since $B_F(R)$ is a compact subset of
$H^k_\sigma$, there  is a continuous mapping $\Psi :B_F(R)
\rightarrow \Theta(h,u_0)\cap X$ such that
\begin{eqnarray}\label{E1.astxanish} \sup_{\hat
u \in B_F(R)} \|\RR_T(u_0, \Psi(\hat u))-\hat u\|_k<\e.
\end{eqnarray}
Therefore the continuous mapping $\Phi:B_F(R)\rightarrow F, $
$\hat u \rightarrow P_F\RR_T(u_0, \Psi(\hat u))$ satisfies the
inequality
\begin{eqnarray}
\sup_{\hat u \in B_F(R)} \|\Phi (\hat u)- \hat u\|_k<M\e.\nonumber
\end{eqnarray}
 Fixing $v\in B_F(R-M\e)$  and applying the  Brouwer
theorem to the mapping $u \rightarrow v + u -\Phi(u
):B_F(R)\rightarrow B_F(R)$, we get
\begin{eqnarray}\label{E1.pfcont}
B_F(R-M\e)\subset \Phi(B_F(R)).
\end{eqnarray}
Let $\hat{u}\in F$. By (\ref{E1.pfcont}), for sufficiently large
$R$ there is a function $u_1\in B_F(R) $ such
that\begin{eqnarray}\label{E1.astxanish2}
 P_F\RR_T(u_0, \Psi(u_1))=\hat u.\end{eqnarray}
Using (\ref{E1.astxanish}) and (\ref{E1.astxanish2}), we obtain
\begin{align}
\|\RR_T(u_0,\Psi(u_1))-\hat{u}\|_{k}&\le\|\RR_T(u_0,\Psi(u_1))-u_1\|_k\nonumber\\&+\|u_1-P_F\RR_T(u_0,
\Psi(u_1))\|_{k}< \e+M\e .\nonumber
\end{align}
Since $\e>0$ was arbitrary, this completes the proof.
\end{proof}
Lemma \ref{L1:strong} implies that Theorem \ref{T.2.1} is an
immediate consequence of the following result, which will be
proved in  Sections 5 and 6.
\begin{theorem}\label{T2.himn}
Let $h\in C^\infty([0,\infty),H_\sigma^{k+2})$. If $E\subset
H_\sigma^{k+2}$ is a finite-dimensional subspace such that
$E_\infty$ is dense in $H^k_\sigma$, then  Eq.~(\ref{E1.eq}) with
$\eta\in C^\infty(J_T,E)$ is uniformly approximately controllable
at any time $T$.
\end{theorem}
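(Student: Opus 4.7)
The plan is to follow the Agrachev--Sarychev scheme. For each $n\ge 0$, let $\PP_n$ denote the assertion that Eq.~(\ref{E1.eq}) with $\eta\in C^\infty(J_T,E_n)$ is uniformly approximately controllable at time $T$. Since $E_0=E$, the theorem reduces to proving $\PP_0$, which I would establish in two parts: (a) $\PP_N$ holds for some (large) $N$ depending on the target, and (b) $\PP_n\Rightarrow \PP_{n-1}$ for every $n\ge 1$; iterating (b) from (a) yields $\PP_0$. Throughout, by Lemma~\ref{R.1}, it suffices to handle compact targets $K\subset H^{k+1}_\sigma$, gaining one derivative of regularity for later use in the perturbative bounds.

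For (a), for each $\hat u\in K$ I would continuously select a smooth path $\psi_{\hat u}\in C^\infty(J_T,H^{k+2}_\sigma)$ from $u_0$ to $\hat u$ (e.g., by mollified linear interpolation) and set $\hat\eta_{\hat u}:=\dot\psi_{\hat u}+B(\psi_{\hat u})-h$, so that $\psi_{\hat u}$ is the exact solution of (\ref{E1.eq}) driven by $\hat\eta_{\hat u}$. Density of $E_\infty$ in $H^k_\sigma$ (hence in $H^{k-1}_\sigma$), together with a partition-of-unity approximation on the compact family $\{\hat\eta_{\hat u}:\hat u\in K\}$, produces for any prescribed tolerance a continuous map $\Psi:K\to C^\infty(J_T,E_N)$ such that $\sup_{\hat u\in K}\|\hat\eta_{\hat u}-\Psi(\hat u)\|_{L^1(J_T,H^{k-1})}$ is arbitrarily small. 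Theorem~\ref{T:pert}(ii) then yields $C(J_T,H^{k-1}_\sigma)$-convergence of the corresponding solutions to $\psi_{\hat u}$. To upgrade the terminal-value convergence from $H^{k-1}$ to $H^k$ as demanded by Definition~\ref{D.3.1}, I would interpolate between this $H^{k-1}$-convergence, the uniform $H^k$-bound obtained by energy estimates via (\ref{E1:B2}), and the $H^{k+1}$-regularity of targets furnished by Lemma~\ref{R.1}.

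For (b), given $\eta_1\in C^\infty(J_T,E_n)=C^\infty(J_T,\FF(E_{n-1}))$, I would invoke (\ref{E2:sahmnf}) with a continuous selection in a fixed finite frame to decompose
\begin{equation*}
\eta_1(t)=\eta(t)-\sum_{i=1}^{N}\alpha_i(t)\,B(\zeta^i(t)),\qquad \eta(t),\zeta^i(t)\in E_{n-1},\ \alpha_i(t)\ge 0,
\end{equation*}
smoothly in $t$. The key construction is the $E_{n-1}$-valued fast-oscillating control
\begin{equation*}
\eta^\e(t):=\eta(t)+\e^{-1}\sum_{i}\sqrt{\alpha_i(t)}\,\phi_i(t/\e)\,\zeta^i(t),
\end{equation*}
where $\phi_i$ are mean-zero $1$-periodic functions whose primitives $\Phi_i$ are also mean zero and satisfy $\int_0^1\Phi_i(s)\Phi_j(s)\,ds=\delta_{ij}$. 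Setting $u^\e=w^\e+z^\e$ with $z^\e(t):=\sum_i\sqrt{\alpha_i(t)}\,\Phi_i(t/\e)\,\zeta^i(t)$ absorbs the singular $\e^{-1}$-contribution, so that $w^\e$ satisfies, up to a smooth bounded remainder, $\dot w+B(w+z^\e)=h+\eta$. A standard averaging argument---using bilinearity of $B$, the mean-zero and orthonormality conditions on $\Phi_i$, and the bilinear estimates (\ref{E1:B1})--(\ref{E1:B2})---shows that in the limit $\e\to 0$ the self-interaction $B(z^\e)$ contributes exactly $\sum_i\alpha_i B(\zeta^i)$ while the cross term averages to zero; hence $w^\e\to u_1$ in $C(J_T,H^{k-1}_\sigma)$, where $u_1$ solves Eq.~(\ref{E1.eq}) with control $\eta_1$. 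Arranging $\phi_i$ to vanish near $t=T$ gives $z^\e(T)=0$, hence $u^\e(T)=w^\e(T)\to u_1(T)$; the $H^k$-upgrade and continuity of $\hat u\mapsto\eta^\e$ proceed as in (a).

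I anticipate two main obstacles. First, upgrading the $H^{k-1}$-stability of Theorem~\ref{T:pert}(ii) to the terminal $H^k$-closeness required by Definition~\ref{D.3.1}: this is precisely the reason for restricting to targets in $H^{k+1}$ via Lemma~\ref{R.1}, and forces one to combine $H^{k-1}$-convergence with uniform $H^k$-bounds by an interpolation-type argument. Second, arranging a continuous selection of $(\eta,\zeta^i,\alpha_i)$ from the highly non-unique decomposition (\ref{E2:sahmnf}) so that continuity of the final control map $\Psi:K\to C^\infty(J_T,E)$ is preserved through each of the finitely many reduction steps. The averaging argument itself is technical but rests only on the bilinear estimates (\ref{E1:B1})--(\ref{E1:B2}) and standard oscillatory-integral estimates.
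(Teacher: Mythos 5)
Your overall architecture coincides with the paper's: your part (a) is essentially the first step of the paper's proof (the paper takes the explicit path $T^{-1}(te^{\mu L}\hat u+(T-t)e^{\delta L}u_0)$, computes the exact control, projects onto $E_N$ using density of $E_\infty$, and closes with Theorem~\ref{T:pert}), and your part (b) is the paper's Theorem~\ref{T.reduct} applied $N$ times. Where you diverge is the implementation of the one-step reduction $E_n\to E_{n-1}$. The paper routes it through the auxiliary two-control system (\ref{E2.conv}): Proposition~\ref{P.2} performs the convexification via Lemma~\ref{L:L}, whose pairing $\zeta^i=-\zeta^{i+n}=\sqrt{\alpha}\,\xi^i$ kills the cross terms exactly inside a convex combination, followed by a fast-switching piecewise-constant relaxation $\zeta_n(t)=\zeta(nt/T)$; Proposition~\ref{P.1} then removes the extra control by the translation $u\mapsto u-\zeta_n$ with $\zeta_n(0)=\zeta_n(T)=0$. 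Your fast-oscillating control with orthonormal mean-zero profiles $\Phi_i$ is the Agrachev--Sarychev variant of the same averaging idea, and your substitution $u^\e=w^\e+z^\e$ plays exactly the role of Proposition~\ref{P.1}; this is workable. One caution: the remainder $\dot z^\e-\e^{-1}\sum_i\sqrt{\alpha_i}\,\phi_i(t/\e)\,\zeta^i$ produced by differentiating $\sqrt{\alpha_i(t)}\,\zeta^i(t)$ is $O(1)$ and is \emph{not} small in $L^1(J_T,H^{k-1})$, so it cannot be absorbed by Theorem~\ref{T:pert} directly; like the main oscillatory term it is only small after integration in time and must be fed into the same $\KK f$-type compactness argument that the paper uses in Step 3 of Proposition~\ref{P.2}.

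The one genuine gap is the point you flag but do not resolve: the continuous (let alone smooth) selection $\eta_1\mapsto(\eta,\zeta^i,\alpha_i)$ from the highly non-unique decomposition (\ref{E2:sahmnf}). As written, part (b) needs such a selection along the whole family of $E_n$-valued controls, and no canonical one exists. The paper's Lemma~\ref{L3.1} is precisely the device that makes it unnecessary: one first approximates the $E_1$-valued control by piecewise-constant convex combinations $\sum_l\varphi_l(t,\hat u)\eta_1^l$ over a \emph{fixed finite set} $A=\{\eta_1^l\}\subset E_1$, fixes once and for all a decomposition of each $\eta_1^l$ via Lemma~\ref{L:L}, and lets only the convex weights $\varphi_l$ depend (continuously) on $(t,\hat u)$. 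A direct selection is also possible in your framework (decompose along a basis of $E_1$ and split coefficients into positive and negative parts), but it must be made explicit, and it costs smoothness in $t$ where coefficients change sign (the amplitudes $\sqrt{\alpha_i(t)}$ are then merely continuous), which is one more reason the paper prefers the piecewise-constant reduction. Your interpolation remedy for upgrading terminal closeness from $H^{k-1}$ to $H^k$ is sound and essentially equivalent to the paper's strategy of running the perturbative estimates one derivative higher after Lemma~\ref{R.1} (for which the paper also inserts a preliminary reduction to $u_0\in H^{k+2}_\sigma$).
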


\section{Controllability of finite-dimensional projections of the velocity and pressure }\label{S:3}

In this section, we are interested in controllability properties
of pressure in Euler system.  We  consider the  problem
(\ref{Pr.E1:1}), (\ref{Pr.E1:2}). If $u\in C(J_T, H^{k})$ is a
solution of (\ref{E1.eq}), (\ref{E1.eulic}), then $(u,p)$ will be
the solution of (\ref{Pr.E1:1}), (\ref{Pr.E1:2}), where
\begin{align}\label{Pr.E4.pitesq}
 p= \Delta^{-1}(\diver h-\sum_{i,j=1}^3 \p_ju_i\p_iu_j).
\end{align}
Here the function $p$ is defined up to the an additive constant
and $\Delta^{-1}$ is the inverse of $\Delta:H^k_\sigma\rightarrow
:H^{k-2}_\sigma$. In what fallows  we normalise $p$ by the
condition that its mean value on $\T^3$ is zero. Denote by
$(\RR(u_0,\eta),\PP(u_0,\eta))$ the solution of (\ref{Pr.E1:1}),
(\ref{Pr.E1:2}) and by $(\RR_t(u_0,\eta),\PP_t(u_0,\eta))$ its
restriction to the time $t$. Eq.~(\ref{Pr.E4.pitesq}) implies that
(\ref{Pr.E1:1}), (\ref{Pr.E1:2})  is not approximately
controllable, so we will be interested in exact  controllability
in projections.

\begin{definition} Eq.~(\ref{Pr.E1:1}) with $\eta \in X$ is said to be exactly controllable in projections
at time $T$ if for any   finite-dimensional subspaces  $F\subset
H^k_\sigma$, $G\subset H^k$ and for any functions $u_0\in
H^k_\sigma$, $\hat{u}\in F$ and $\hat{p}\in G$ there is a control
$\eta \in \Theta(h,u_0)\cap X $ such that
\begin{align}
P_F\RR_T(u_0,\eta)&=\hat{u},\nonumber\\
P_G\PP_T(u_0,\eta)&=\hat{p}.\nonumber
\end{align}
\end{definition}

\begin{theorem}\label{Pr.T1}
If $E\subset H_\sigma^{k+2}$ is a finite-dimensional subspace such
that $E_\infty$ is dense in $H^k_\sigma$, then
Eq.~(\ref{Pr.E1:1}) with $\eta\in C^\infty(J_T,E)$ is exactly
controllable in projections at any time $T>0$.
\end{theorem}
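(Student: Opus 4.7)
The plan is to reduce the joint control of $(P_F u(T), P_G p(T))$ to the pure velocity controllability of Theorem \ref{T.2.1} by enlarging the target subspace and running a Brouwer-degree argument. First I would exploit that $\eta\in H^{k+2}_\sigma$ is divergence-free: taking the divergence of \eqref{Pr.E1:1} and normalising $p$ to have zero mean gives the instantaneous expression
\[
p(t)=\Delta^{-1}\bigl(\diver h(t)-Q(u(t))\bigr),\qquad Q(u):=\sum_{i,j=1}^{3}\p_j u_i\,\p_i u_j,
\]
so the problem reduces to producing $u(T)$ with $P_F u(T)=\hat u$ and $P_G\Delta^{-1}Q(u(T))=q_*$, where $q_*:=P_G\Delta^{-1}\diver h(T)-\hat p\in G$.

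The key construction is a finite-dimensional subspace $\tilde F\subset H^k_\sigma$ of dimension $\dim F+\dim G$, containing $F$, together with a point $w_*\in\tilde F$ such that the polynomial map
\[
\Phi:\tilde F\to F\times G,\qquad \Phi(w):=\bigl(P_F w,\,P_G\Delta^{-1}(\diver h(T)-Q(w))\bigr)
\]
satisfies $\Phi(w_*)=(\hat u,\hat p)$ and $D\Phi(w_*)$ is a linear isomorphism. The extra directions beyond $F$ would be divergence-free high-frequency modes built from the bilinear identity that for $w_{(m)}=\cos(m\cdot x)\,a$, $w_{(m')}=\cos(m'\cdot x)\,b$ with $a\cdot m=b\cdot m'=0$, the cross term in $Q(w_{(m)}+w_{(m')})$ contains a contribution proportional to $(a\cdot m')(b\cdot m)\cos((m-m')\cdot x)$. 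Letting $m-m'$ range over the Fourier support of a basis of $G$, and taking $|m|$ large enough that the new modes are $L^2$-orthogonal to $F$, one can arrange that the restriction of $D\Phi(w_*)$ to these extra directions maps onto $G$; combined with the obvious surjectivity of $P_F$ on $F$, this yields invertibility of $D\Phi(w_*)$ by equal dimension.

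With $\Phi$ a local diffeomorphism near $w_*$, pick $r>0$ so small that the closed ball $K:=\overline B_{\tilde F}(w_*,r)$ is mapped by $\Phi$ onto a neighborhood of $(\hat u,\hat p)$ and $(\hat u,\hat p)\notin\Phi(\partial K)$. I would then apply Theorem \ref{T2.himn} with the compact set $K\subset\tilde F\subset H^k_\sigma$: for any $\e>0$ it yields a continuous $\Psi:K\to\Theta(h,u_0)\cap C^\infty(J_T,E)$ such that $\sup_{w\in K}\|\RR_T(u_0,\Psi(w))-w\|_k<\e$. Setting
\[
\Xi(w):=\bigl(P_F\RR_T(u_0,\Psi(w)),\,P_G\PP_T(u_0,\Psi(w))\bigr),
\]
the Lipschitz continuity of $Q$ on bounded sets of $H^k$ shows that $\Xi$ is continuous and $O(\e)$-close to $\Phi$ on $K$. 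Homotopy invariance of the Brouwer degree then forces $\Xi$ to hit $(\hat u,\hat p)$ for $\e$ small enough, and $\eta:=\Psi(w)$ at the corresponding $w$ is the sought-after control.

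The hard part will be the surjectivity of $D\Phi(w_*)$ in the second step. One has to verify that the bilinear structure of $Q$ together with the saturation hypothesis (density of $E_\infty$) really does allow every low-frequency mode making up $G$ to be produced from high-frequency divergence-free perturbations that are $L^2$-orthogonal to $F$; in particular, one needs to select the amplitudes $a,b$ so that the coefficients $(a\cdot m')(b\cdot m)$ are simultaneously non-zero for enough mode pairs to span $G$ after the $P_G\Delta^{-1}$ projection. Once this is in place, the remaining ingredients---applying Theorem \ref{T2.himn} with the enlarged target space $\tilde F$, the Lipschitz estimate for the pressure map, and the Brouwer degree argument---are standard adaptations of the tools already developed in Sections \ref{S:1}--\ref{S:3}.
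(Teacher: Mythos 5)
Your overall strategy is the one the paper follows: express $p$ through $\Delta^{-1}$ of the quadratic form, enlarge the velocity target by high\nobreakdash-frequency divergence-free modes orthogonal to $F$ whose quadratic interactions generate the Fourier modes of $G$, steer the velocity to the enlarged target with Theorem \ref{T2.himn}, and close with the Lipschitz continuity of the pressure map and a Brouwer argument (the paper does the latter in the target variables, as in Lemma \ref{L1:strong}). However, the step you defer as ``the hard part'' is precisely the content of the paper's Lemma \ref{Pr.L1}, and the way you have set it up leaves two genuine gaps. First, your degree argument needs a point $w_*$ with $\Phi(w_*)=(\hat u,\hat p)$ \emph{before} you can speak of $D\Phi(w_*)$; producing such a $w_*$ already requires solving the quadratic system $A(\hat u+v)=\hat p$, i.e.\ the full lemma. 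Moreover, at the natural candidate the derivative can degenerate: if $\hat p=A(\hat u)$ one may take $v=0$, and then $D\Phi(\hat u)$ restricted to the high-frequency directions has no component in $G$ at all (the quadratic form has a critical point there), so ``local diffeomorphism near $w_*$'' fails for such targets. The paper sidesteps this by constructing a \emph{globally continuous exact section} $(\hat u,\hat p)\mapsto v$ with $A(\hat u+v)=\hat p$ --- splitting the new modes into pairs $(s_{k_n^1},c_{k_n^2})$, $(s_{k_n^3},s_{k_n^4})$ so that each target coefficient $C_n\sin\langle n,x\rangle$, $D_n\cos\langle n,x\rangle$ is produced by a single product $C_{k_n^1}D_{k_n^2}$ --- and then no nondegeneracy of a derivative is ever needed.

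Second, orthogonality of the new modes to $F$ is not enough: you must also ensure that \emph{all} spurious quadratic interactions --- self-interactions $k_n^i+k_n^j$, cross-interactions $k_n^i\pm k_r^j$ between different target frequencies $n\neq r$, and interactions of the new modes with $\hat u$ --- land outside the Fourier support of $G$, otherwise they pollute $P_G\Delta^{-1}Q(w)$ and destroy the triangular structure you rely on. This is condition (b) in the paper's construction (\ref{E:knisahm}) and is the reason the frequencies $k_n^i$ are taken along carefully separated arithmetic progressions with $|k_n^i|>2m$; your sketch only mentions ``$|m|$ large enough that the new modes are $L^2$-orthogonal to $F$.'' You also need the non-parallelism condition (c) to guarantee that the amplitudes can be chosen with $\langle l(k_n^1),k_n^2\rangle\langle l(k_n^2),k_n^1\rangle\neq 0$, which is your requirement that the coefficients $(a\cdot m')(b\cdot m)$ be simultaneously nonzero. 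Once these points are supplied, the rest of your argument (approximation of $\Phi$ by $\Xi$ via Theorem \ref{T2.himn}, the estimate (\ref{Pr.AA}), and the degree/fixed-point step) matches the paper.
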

\begin{proof}
To simplify the proof, we shall  assume that $h=0$. The proof
remains literally the same in the case $h\neq 0$. An argument
similar to that used in the proof of Lemma~ \ref{L1:strong} shows
that it suffices to establish the following property:  for any
compact set $K\subset H^k_\sigma\times H^{k}$ and for any constant
$\e>0$ there is a continuous function $\Psi:K\rightarrow
\Theta(h,u_0) \cap X$ such that
\begin{align}
\sup_{(\hat{u},\hat{p} )\in K}\|\RR_T(u_0,
\Psi(\hat{u}, \hat{p}))-\hat{u}\|_k&<\e, \nonumber\\
\sup_{(\hat{u},\hat{p} )\in K}\|P_G\PP_T(u_0, \Psi(\hat{u},
\hat{p}))-\hat{p}\|_k&<\e.\nonumber
\end{align}
We introduce the spaces
\begin{align}
F_m:&=span \{c_n,s_n ,\,\,|n| \leq m, \,\, n\in \Z_*^3 \},\nonumber\\
G_m:&=span\{\sin\langle n, x \rangle, \cos\langle n, x \rangle,
|n|\le m, \,\, n\in \Z^3_*\},\nonumber
\end{align}
where the functions $c_n, s_n$ are defined in (\ref{E.cmsmer}),
(\ref{E.cmsmer2}). By an approximation argument, it suffices to
construct $\Psi$ for any compact set $K\subset F_m \times G_m$.
For an integer $m\ge 1$, we introduce the symmetric quadratic form
$$A( u,v)=-P_{G_m}\Delta ^{-1}\sum_{i,j=1}^3 \p_j u
_i\p_iv_j$$ and set $A(u)=A(u,u).$
 Clearly, we have the following
inequality
\begin{eqnarray}\label{Pr.AA}
 \|A(u)-A(v)\|_k\le C\|u-v\|_k,
\end{eqnarray}
where $u,v \in H_\sigma ^k$ and $C$ is constant depending on
$\|u\|_k+\|v\|_k$. Eq.~(\ref{Pr.E4.pitesq}) implies
\begin{eqnarray}\label{Pr.AA2}
 P_{G_m}\PP_t(u_0,\eta)=A(\RR_t(u_0,\eta)).
\end{eqnarray}
We admit for the moment the following lemma.

\begin{lemma}\label{Pr.L1}  For any $\hat{u}\in F_m$ and $\hat{p}\in G_m$ there is  $v\in
F_m^\bot\cap H^k_\sigma$ such that
\begin{eqnarray}\label{Pr.visah}
\hat{p}=A(\hat u +v),
\end{eqnarray}
where  $F_m^\bot$ is the orthogonal complement of $F_m$ in the
space $H$. Moreover, the  mapping $(\hat u, \hat p)\rightarrow v$ is
continuous from $F_m\times G_m$ to $F_m^\bot$, where $F_m,G_m$ and
$F_m^\bot$ are endowed with the norm of $H^k$.
\end{lemma}

By Theorem \ref{T2.himn}, there is a continuous mapping $\Psi$
such that
\begin{eqnarray}
\sup_{(\hat{u},\hat{p} )\in
K}\|\RR_T(u_0,\Psi(\hat{u},\hat{p}))-(\hat{u}+v)\|_k<\e,\nonumber
\end{eqnarray}
where $v$ satisfies  (\ref{Pr.visah}). From (\ref{Pr.AA}),
(\ref{Pr.AA2})  and  (\ref{Pr.visah}), we have
\begin{align}
\sup_{(\hat{u},\hat{p} )\in K}\|P_{G_m}\PP_T(u_0, \Psi(\hat{u},
\hat{p}))-\hat{p}\|_k&\le \sup_{(\hat{u},\hat{p})\in K} \|A
(\RR_T(u_0,\Psi(\hat{u},\hat{p})))-A (\hat u
+v)\|_k\nonumber\\&\le C\sup_{(\hat{u},\hat{p} )\in
K}\|\RR_T(u_0,\Psi(\hat{u},\hat{p}))-(\hat{u}+v)\|_k. \nonumber
\end{align}
This completes the proof of Theorem \ref{Pr.T1}.

\end{proof}

\begin{proof}[Proof of Lemma \ref{Pr.L1}]
It is easy to see that (\ref{Pr.visah}) is equivalent to
\begin{align}\label{Pr.mech}
&A(v)+2A(\hat u,v)=\hat{p}-A(\hat u)=:\sum_{|n|\le m}
(C_n\sin\langle n, x \rangle +D_n \cos\langle n, x \rangle).
\end{align}

For all $n\in\Z_*^3$, $|n|\le m$ let us take $\{k_n^1\}$,
$\{k_n^2\}$, $\{k_n^3\}$ and $\{k_n^4\}$  in $\Z_*^3$ such that
$|k_n^i|>2m$  and

 \begin{enumerate}
\item[(a)]  $k_n^2-k_n^1=k_n^4-k_n^3=n$,
 \item[(b)] $\min\{|k_n^i+ k_n^j|,|k_n^i\pm k_r^j|,|k_n^3 -k_n^d|,|k_n^4
-k_n^d|  \} >m$,
\item[(c)]$k_n^1$ and  $k_n^3$ are not parallel to  $k_n^2$ and $k_n^4$,
respectively,
\end{enumerate}
for all $i,j= 1,2,3,4$, $d=1,2$,  $|r|<m$ and $ n\neq r$. This
choice is possible. Indeed, let $\phi:\Z_*^3\rightarrow \N_*$ be
an injection and let
\begin{align}\label{E:knisahm}
&k_n^1= 8\phi(n)\mathfrak{m}(n),\qquad\qquad\qquad k_n^3=
(8\phi(n)+4)\mathfrak{m}(n),\nonumber\\& k_n^2=
8\phi(n)\mathfrak{m}(n)+n,\qquad\qquad\, k_n^4=
(8\phi(n)+4)\mathfrak{m}(n)+n,
\end{align}
where $\mathfrak{m}(n)\in \Z^3_*$ is not parallel to $n$ and
$|\mathfrak{m}(n)|=m$. It is easy to see that $\{k_n^j\}$ satisfy
  $(a)-(c)$.      We seek $v$ in the form
$$v=\sum_{|n|\le m} (C_{k_n^1}s_{k_n^1}
+D_{k_n^2}c_{k_n^2}+C_{k_n^3}s_{k_n^3} +C_{k_n^4}s_{k_n^4}).$$
Substituting this expression of $v$ into (\ref{Pr.mech}) and using
the construction of $k_n^i$, we obtain
\begin{align}
&\sum_{|n|\le m} \bigg(A(C_{k_n^1}s_{k_n^1} +D_{k_n^2}c_{k_n^2}) +
A(C_{k_n^3}s_{k_n^3} +C_{k_n^4}s_{k_n^4})
\bigg)\nonumber\\&=\sum_{|n|\le m} (C_n\sin\langle n, x \rangle
+D_n \cos\langle n, x \rangle).\nonumber
 \end{align}
On the other hand,
\begin{align}A(C_{k_n^1}s_{k_n^1} +D_{k_n^2}c_{k_n^2})&=\Delta ^{-1}\sum_{i,j=1}^3
l_i(k_n^1)(k_n^1)_jl_j(k_n^2)(k_n^2)_i
C_{k_n^1}D_{k_n^2}\sin\langle n, x \rangle \nonumber\\&=
-\frac{C_{k_n^1}D_{k_n^2}}{n_1^2+n_2^2+n_3^2} \langle l(k_n^1),
k_n^2 \rangle \langle l(k_n^2), k_n^1 \rangle \sin\langle n, x
\rangle,\nonumber
 \end{align}
 where $l_j(k_n^i)$ and $(k_n^i)_j$ are $j$-th coordinates of $l(k_n^i)$ and $k_n^i $, respectively.
As $k_n^1$ is not parallel to  $k_n^2$, we can choose $l(k_n^1)$
and $l(k_n^2)$   not perpendicular to  $k_n^2$ and  $k_n^1$,
respectively, i.e.,
 $$\langle l(k_n^1),
k_n^2 \rangle \langle l(k_n^2), k_n^1 \rangle  \neq 0.$$ Hence,
there are  constants $C_{k_n^1}, D_{k_n^2}$ continuously depending
on $C_n$, and therefore on $(\hat u, \hat p)$, such that
$$A(C_{k_n^1}s_{k_n^1}
 +D_{k_n^2}c_{k_n^2})=C_n \sin\langle n, x\rangle
. $$ In the same way, we can choose $C_{k_n^3}, C_{k_n^4}$ such
that
\begin{align}
A(C_{k_n^3}s_{k_n^3} +C_{k_n^4}s_{k_n^4})= D_n \cos\langle n, x
\rangle.\nonumber
 \end{align}
 Thus we have (\ref{Pr.visah}).

\end{proof}

\section{Proof of Theorem \ref{T2.himn}}\label{S:4}

Let us fix a constant $\e>0$, an initial point $u_0\in
H^{k}_\sigma$,
 a compact set $K\subset H^{k}_\sigma$ and a vector subspace  $X\subset L^1(J_T,H_\sigma^k)$. Eq.~(\ref{E1.eq}) with
$\eta\in X$ is said to be uniformly $(\e,u_0,K)$-controllable at
  time $T>0$ if there is a continuous mapping $$\Psi:K\rightarrow \Theta(h,u_0)\cap X$$
such that
\begin{eqnarray}\label{E2.unif}
\sup_{\hat{u}\in K}
\|\RR_T(u_0,\Psi(\hat{u}))-\hat{u}\|_k<\e,\nonumber
\end{eqnarray}
where  $ \Theta(h,u_0) \cap X$ is endowed with the   norm of $
L^1(J_T,H^k_\sigma)$.

Theorem \ref{T2.himn} is deduced from the following result, which
is established in  next section.

\begin{theorem}\label{T.reduct} Let  $E\subset
H_\sigma^{k+2}$ be a finite-dimensional subspace. If
Eq.~(\ref{E1.eq}) with $\eta \in C^\infty(J_T,\FF(E))$ is
uniformly $(\e,u_0,K)$-con\-trollable, then it is also
$(\e,u_0,K)$-controllable
 with $\eta\in C^\infty(J_T,E)$.
\end{theorem}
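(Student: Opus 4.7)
The plan is to realize each $\FF(E)$-valued control as the limit of $E$-valued controls carrying fast oscillations in the directions $\zeta^i\in E$ of the decomposition \eqref{E2:sahmnf}, and to let the quadratic nonlinearity $B$ convert those oscillations, on average, into the missing piece $-\sum_i B(\zeta^i)$. Let $\Psi_0:K\to C^\infty(J_T,\FF(E))$ be the continuous map provided by the hypothesis (with $\e/2$ in place of $\e$), write $\eta_1(\hat u):=\Psi_0(\hat u)$, and denote by $u(\hat u):=\RR(u_0,\eta_1(\hat u))$ the associated $\FF(E)$-trajectory; by assumption $\sup_{\hat u\in K}\|u(T,\hat u)-\hat u\|_k<\e/2$.

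First I would build a continuous decomposition of $\eta_1(t,\hat u)\in\FF(E)$ of the form
$$\eta_1(t,\hat u)=\eta(t,\hat u)-\sum_{i=1}^N B(\zeta^i(t,\hat u)),\qquad \eta,\zeta^i\in E,$$
after absorbing the positive scalars into the $\zeta^i$'s via $\alpha B(\zeta)=B(\sqrt\alpha\,\zeta)$. Because $\FF(E)$ is finite-dimensional this reduces to fixing such a decomposition on a basis; the resulting sign issue (a negative coefficient produces a $+B$ term in the reconstruction) is absorbed using the polarization identity $\tilde B(a,b)=B(a+b)-B(a)-B(b)$ and the fact that $-\eta_1\in\FF(E)$ also admits a decomposition. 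Next I would fix smooth $1$-periodic scalar functions $\psi_1,\dots,\psi_N$ with $\int_0^1\psi_i\,ds=0$, $\int_0^1\psi_i\psi_j\,ds=\delta_{ij}$, $\psi_i(0)=0$, and for $\delta>0$ with $T/\delta\in\N$ set
$$\zeta_\delta(t,\hat u)=\sum_{i=1}^N\zeta^i(t,\hat u)\,\psi_i(t/\delta),\qquad \eta_\delta(t,\hat u)=\eta(t,\hat u)+\partial_t\zeta_\delta(t,\hat u).$$
Since $\zeta^i$ and $\dot\zeta^i$ take values in the finite-dimensional $E$, so does $\eta_\delta$, and $\hat u\mapsto\eta_\delta(\cdot,\hat u)$ is continuous into $C^\infty(J_T,E)\subset L^1(J_T,H^k_\sigma)$; the boundary conditions give $\zeta_\delta(0,\cdot)=\zeta_\delta(T,\cdot)=0$.

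Let $u_\delta:=\RR(u_0,\eta_\delta)$ (existence on $[0,T]$ will follow from Theorem~\ref{T:pert} once $\delta$ is small) and introduce $v_\delta:=u_\delta-\zeta_\delta$. The substitution cancels $\partial_t\zeta_\delta$ and gives
$$\dot v_\delta+B(v_\delta)+\tilde B(v_\delta,\zeta_\delta)+B(\zeta_\delta)=h+\eta,$$
to be compared with $\dot u+B(u)=h+\eta-\sum_i B(\zeta^i)$. Expanding $B(\zeta_\delta)=\sum_{i,j}\psi_i(t/\delta)\psi_j(t/\delta)\,B(\zeta^i,\zeta^j)=\sum_i B(\zeta^i)+\sum_{i,j}\phi_{ij}(t/\delta)B(\zeta^i,\zeta^j)$ with $\phi_{ij}:=\psi_i\psi_j-\delta_{ij}$ of zero mean, the residual $w_\delta:=v_\delta-u$ satisfies
$$\dot w_\delta+B(v_\delta,w_\delta)+B(w_\delta,u)=-\sum_i\psi_i(t/\delta)\tilde B(v_\delta,\zeta^i)-\sum_{i,j}\phi_{ij}(t/\delta)B(\zeta^i,\zeta^j),$$
whose right-hand side is purely oscillatory. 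Multiplying by $L^{k-1}w_\delta$, using \eqref{E1:B1}--\eqref{E1:B2}, and a Gronwall argument as in Theorem~\ref{T:pert}, one estimates the oscillatory forcing by integrating by parts in $t$ against the bounded $1$-periodic primitives $\Psi_i,\Phi_{ij}$ of $\psi_i,\phi_{ij}$; the condition $T/\delta\in\N$ kills the boundary terms and leaves an $O(\delta)$ remainder, yielding $\|w_\delta\|_{C(J_T,H^{k-1})}\to 0$. An interpolation between this $H^{k-1}$ decay and the uniform bound on $v_\delta,u$ in a higher norm (coming from $E\subset H^{k+2}_\sigma$ and smoothness of $\Psi_0$) upgrades the convergence to $H^k$. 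Since $\zeta_\delta(T)=0$, this gives $u_\delta(T,\hat u)=v_\delta(T,\hat u)\to u(T,\hat u)$ in $H^k$, uniformly in $\hat u\in K$. Choosing $\delta$ small enough then delivers $\|\RR_T(u_0,\eta_\delta(\hat u))-\hat u\|_k<\e$ and the required continuity of $\hat u\mapsto\eta_\delta(\hat u)$.

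\textbf{Main obstacle.} The sharpest point will be the averaging step and, in particular, upgrading the natural $H^{k-1}$ decay (which follows rather directly from Theorem~\ref{T:pert}) to $H^k$ decay: an $H^k$ energy estimate pairs the oscillatory forcing with $L^{k-1}w_\delta$, and when one integrates by parts in time to extract a factor $\delta$, the resulting integrand contains $\dot w_\delta$, which is one derivative less regular than $w_\delta$. Closing the estimate requires exploiting the extra regularity $\zeta^i\in H^{k+2}_\sigma$ together with an interpolation/commutator argument rather than a bare energy bound. A secondary technical point is the continuous selection of the decomposition $\eta_1=\eta-\sum B(\zeta^i)$ in Step~1, whose sign-handling via polarization is standard but not entirely trivial.
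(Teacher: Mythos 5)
Your overall mechanism is the right one and is essentially the Agrachev--Sarychev scheme the paper implements: inject a fast $E$-valued oscillation $\zeta_\delta$, remove its time derivative by the substitution $v_\delta=u_\delta-\zeta_\delta$ (this is exactly the paper's Proposition~\ref{P.1}), and let the average of $B(\zeta_\delta)$ produce the missing term $\sum_iB(\zeta^i)$ (the paper's Proposition~\ref{P.2}). The paper realizes the oscillation with piecewise-constant switching profiles and convex weights (Lemmas~\ref{L3.1} and~\ref{L:L}) rather than your orthonormal smooth profiles $\psi_i$; that difference is cosmetic. What is not cosmetic is the error analysis, and the gap sits exactly where you flag it. Your plan pairs the oscillatory forcing with $L^{k-1}w_\delta$, integrates by parts in $t$, and meets $\dot w_\delta$, which costs a derivative; your proposed repair --- interpolate the $H^{k-1}$ decay against a uniform bound of $u,v_\delta$ ``in a higher norm'' --- needs that higher-norm bound, and it is not available: smoothness of $\Psi_0$ in time does not give spatial regularity of the trajectory beyond that of $u_0\in H^k_\sigma$. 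Without a preliminary reduction to smoother initial data the interpolation has nothing to interpolate against, so as written the $H^k$ convergence is not established.

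The paper closes this in two moves that you should borrow. First, it reduces to $u_0\in H^{k+2}_\sigma$ by a separate density-plus-perturbation argument (two applications of Theorem~\ref{T:pert} at the start of Section~\ref{S:5}); after that, the reference trajectory is uniformly bounded in $H^{k+2}$. Second, and more importantly, it never performs an energy estimate with oscillatory forcing: setting $\KK f_n(t)=\int_0^tf_n(s)\,\dd s$, it observes that $v_n:=u_1-\KK f_n$ solves the perturbed equation \emph{exactly}, with the perturbation $\KK f_n$ sitting in the $\zeta$-slot of Eq.~(\ref{E2.conv}). The only analytic input is then the elementary convergence $\KK f_n\to0$ in $C(J_T,H^{k+1})$ (Arzel\`a--Ascoli plus exact cancellation over whole periods, for piecewise-constant data), after which Theorem~\ref{T:pert}(ii), applied one regularity level up, converts this into convergence of $\RR(u_0,\zeta_n,\eta)$ to $u_1$ in $C(J_T,H^{k})$ --- no commutators, no loss of derivative. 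A secondary weakness of your Step~1: the coefficients $\sqrt{|c_l(t,\hat u)|}$ you absorb into the $\zeta^i$ are not $C^1$ in $t$ where $c_l$ changes sign, so $\eta_\delta=\eta+\p_t\zeta_\delta$ need not lie in $C^\infty(J_T,E)$; the paper avoids this by first approximating the $E_1$-valued control by piecewise-constant controls taking finitely many values (Lemma~\ref{L3.1}), so that only finitely many fixed decompositions are needed and the dependence on $\hat u$ enters only through scalar weights.
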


\begin{proof}[Proof of Theorem \ref{T2.himn}]

We first prove that there is an integer $N\ge 1$ depending only on
$\e $, $u_0 $ and $K $ such that Eq.~(\ref{E1.eq}) with $\eta\in
C(J_T,E_N)$ is uniformly $(\e,u_0,K)$-controllable  at time $T$.
Let us define a continuous operator defined on $J_T\times K$ by
$$u_{\mu,\delta}(t, \hat u) = T^{-1}(te^{\mu L}\hat u + (T - t)e^{\delta L}u_0)$$
It is easy to see that $u_{\mu,\delta}$ satisfies
Eq.~(\ref{E1.eq}) with
$$\eta_{\mu,\delta} =\dot{u}_{\mu,\delta}+B (u_{\mu,\delta})-h(t).$$
As $K$ is a compact set in $H^k_\sigma$, we have
\begin{align}\sup_{\hat u \in K}\|u_{\mu,\delta}(T, \hat u)-\hat u\|_k&\rightarrow 0 \,\, \text{as}\,\,\mu \rightarrow 0\nonumber,\\
\sup_{\hat u \in K} \|u_{\mu,\delta}(0, \hat u)- u_0\|_k&\rightarrow 0 \,\, \text{as}\,\, \delta \rightarrow 0. \nonumber
 \end{align}
The fact that $E_\infty$ is dense in $H^k_\sigma$ implies
$$ \|P_{E_N} \eta_{\mu,\delta} - \eta_{\mu,\delta}\|_{L^1(J_T,H^{k})} \rightarrow 0 \,\, \text{as}\,\,N \rightarrow \infty.$$
By Theorem \ref{T:pert}, we can chose  $N$, $\mu$ and $\delta$ such that
$$ \sup _{\hat u \in K}\|\RR (u_0, P_{E_N}\eta_{\mu,\delta}(\hat u))-\hat u \|_k<\e.$$
We note that   the mapping $P_{E_N}\eta_{\mu,\delta}
(\cdot,\cdot):\hat{u}\rightarrow
P_{E_N}\eta_{\mu,\delta}(\cdot,\hat{u})$   is continuous from $K$
to $C(J_T,H^k_\sigma)$. Hence Eq.~(\ref{E1.eq}) is uniformly
$(\e,u_0,K)$-controllable with $\eta\in C(J_T,E_N)$. Applying $N$
times Theorem \ref{T.reduct}, we complete the proof of Theorem
\ref{T2.himn}.
\end{proof}

\section{Proof of Theorem \ref{T.reduct}}\label{S:5}
The proof of Theorem \ref{T.reduct} is inspired by ideas from
\cite{agr1, agr2, shi1,shi2}. Let us consider  the following
control system:
\begin{align}
\dot{u}+ B(u+\zeta)&=h+\eta,\label{E2.conv}
\end{align}
where $\eta, \zeta$ are $E$-valued controls.
 Let $\hat{\Theta}(u_0,h)$ be the set of pairs
$(\eta, \zeta)\in L^1(J_T,H_\sigma^k)\times
L^2(J_T,H_\sigma^{k+1})$ for which problem (\ref{E2.conv}),
(\ref{E1.eulic}) has a unique solution in $C(J_T,H_\sigma^{k})$.
Eq (\ref{E2.conv}) with $(\eta, \zeta)\in \hat X\subset
L^1(J_T,H_\sigma^k)\times L^2(J_T,H_\sigma^{k+1})$ is said to be
uniformly $(\e,u_0,K)$-controllable if there is a continuous
mapping
$$\hat{\Psi}:K\rightarrow  \hat{\Theta}(h,u_0)\cap \hat X $$
such that
\begin{eqnarray}\label{E2.unif1}
\sup_{\hat{u}\in K} \|\RR_T(u_0,
\hat{\Psi}(\hat{u}))-\hat{u}\|_k<\e,
\end{eqnarray}
where $\hat{\Theta}(h,u_0)\cap \hat X$ is endowed with the norm of
$ L^1(J_T,H_\sigma^k)\times L^2(J_T,H_\sigma^{k+1})$.

We claim that, when proving Theorem \ref{T.reduct}, it suffices to
assume   $u_0\in H_\sigma^{k+2}$. Suppose that for any $v_0\in
H_\sigma^{k+2}$ and for any continuous mapping $\Phi:K \rightarrow
{\Theta}(h,v_0)\cap C^\infty (J_T,E_1)$ there is a continuous
mapping
$$\hat{\Phi}:K\rightarrow  {\Theta}(h,v_0)\cap C^\infty (J_T,E)$$
such that
\begin{eqnarray}
\sup_{\hat{u}\in K}\|\RR_T(v_0,\Phi(\hat u))-\RR_T(v_0,\hat\Phi(\hat u))\|_{k}<\frac{\e}{3}.\nonumber
\end{eqnarray}
Let us show that for any $u_0\in H_\sigma^k$  and for any
continuous mapping   $\Psi:K \rightarrow {\Theta}(h,v_0)\cap
C^\infty (J_T,E_1)$   there is a continuous mapping
$\hat{\Phi}:K\rightarrow  {\Theta}(h,v_0)\cap C^\infty (J_T,E)$
such that
$$\sup_{\hat{u}\in K} \|\RR_T(u_0,
\Psi(\hat{u}))-\RR_T(u_0, \hat \Psi(\hat{u}))\|_k<\e.$$
 By Theorem \ref{T:pert}, there is $v_0\in H_\sigma^{k+2}$ such that
\begin{equation}\label{E3:1 }
  \sup_{\hat{u}\in K}  \|  \RR(u_0,\Psi(\hat u))-\RR(v_0,\Psi(\hat u))    \|_{C(J_T,H^{k})}< \frac{\e}{3}.
\end{equation}
By our assumption, as $v_0 \in H_\sigma^{k+2}$, there is a
continuous mapping $$\hat{\Psi}_{\e}:K\rightarrow
{\Theta}(h,u_0)\cap C^\infty (J_T,E) $$ such that
\begin{eqnarray}\label{E3.un11}
\sup_{\hat{u}\in K}\|\RR_T(v_0,\Psi(\hat
u))-\RR_T(v_0,\hat\Psi_{\e}(\hat u))\|_{k}<\frac{\e}{3}.
\end{eqnarray}
 By Theorem \ref{T:pert}, we have
 \begin{equation}\label{E3:3 }
 \|\RR_T(v_0 ,\hat\Psi_{\e}(\hat u))-\RR_T(u_0,\hat\Psi_{\e}(\hat
 u))\|_{k}\le C\|v_0-u_0\|,
\end{equation}
where $C$ is a constant not depending  on $\e $. Choosing $v_0 $
sufficiently close to $u_0$ and using  inequalities (\ref{E3:1 }),
(\ref{E3.un11}) and (\ref{E3:3 }), we get
$$\| \RR_T(u_0,\Psi(\hat u)) -\RR_T(u_0,\hat\Psi_{\e}(\hat u)) \|_{k}< \e.$$
 From now on, we assume that $u_0\in H_\sigma^{k+2}$. In this case,  Theorem \ref{T.reduct}  is deduced  from the following two propositions.
\begin{proposition}\label{P.1}
Eq.~(\ref{E1.eq}) with $\eta \in C^\infty(J_T,E)$ is uniformly
$(\e,u_0,K)$-con\-trollable if and only if so is
Eq.~(\ref{E2.conv}) with $(\eta, \zeta)\in C^\infty(J_T,E\times
E)$.
\end{proposition}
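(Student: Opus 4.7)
The proposition is an equivalence, and the forward direction (original implies convexified) is trivial: given a continuous $\Psi:K \to \Theta(h, u_0) \cap C^\infty(J_T, E)$, set $\hat\Psi(\hat u) := (\Psi(\hat u), 0)$; with $\zeta \equiv 0$, equation~(\ref{E2.conv}) reduces to~(\ref{E1.eq}), so the same closeness bound passes to the convexified framework. The substantive direction is to convert a pair $(\eta_{\hat u}, \zeta_{\hat u})$ furnished by the convexified controllability into a genuine $E$-valued control for the original equation, and that is where the plan concentrates.

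The key algebraic observation is that if $u$ solves $\dot u + B(u + \zeta) = h + \eta$ with $u(0) = u_0$, then $v := u + \zeta$ satisfies
\begin{equation*}
\dot v + B(v) = h + \eta + \dot\zeta, \qquad v(0) = u_0 + \zeta(0), \qquad v(T) = u(T) + \zeta(T).
\end{equation*}
Hence, as soon as $\zeta(0) = \zeta(T) = 0$, the function $v$ is a bona fide solution of~(\ref{E1.eq}) with the $E$-valued control $\eta + \dot\zeta$, and it coincides with $u$ at the terminal time. Given the continuous map $\hat\Psi(\hat u) = (\eta_{\hat u}, \zeta_{\hat u})$ supplied by the hypothesis, I would fix a smooth cutoff $\chi_\delta:J_T \to [0,1]$ with $\chi_\delta(0) = \chi_\delta(T) = 0$ and $\chi_\delta \equiv 1$ on $[\delta, T-\delta]$, and define
\begin{equation*}
\Psi(\hat u) := \eta_{\hat u} + (\chi_\delta \zeta_{\hat u})' \in C^\infty(J_T, E).
\end{equation*}
By the identity above, $\RR_T(u_0, \Psi(\hat u))$ equals the time-$T$ value of the convexified solution associated with the pair $(\eta_{\hat u}, \chi_\delta \zeta_{\hat u})$.

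It remains to show that, for $\delta$ small, this value stays within $H^k$-distance less than the slack in the strict inequality of the hypothesis from the convexified solution driven by $(\eta_{\hat u}, \zeta_{\hat u})$, uniformly on $K$. Compactness of $\hat\Psi(K)$ together with the finite-dimensionality of $E$ yields a uniform bound on $\|\zeta_{\hat u}\|_{C(J_T, H^{k+2})}$, hence $\|\zeta_{\hat u} - \chi_\delta \zeta_{\hat u}\|_{L^2(J_T, H^{k+2})} = O(\sqrt{\delta})$ uniformly in $\hat u$. Here lies the main obstacle: Theorem~\ref{T:pert}(ii) only controls the difference of solutions in $H^{k-1}$ by an $L^2(H^k)$ perturbation of $\zeta$, one derivative short of what is required. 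I would sidestep this by invoking Theorem~\ref{T:pert}(ii) with $k$ replaced by $k+2$, an application legitimated precisely by the reduction to $u_0 \in H^{k+2}_\sigma$ carried out just before the proposition. The resulting $O(\sqrt{\delta})$ bound in $C(J_T, H^{k+1})$ dominates the $H^k$ norm, and a sufficiently small $\delta$ closes the estimate.

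Continuity of $\Psi:K \to \Theta(h, u_0) \cap C^\infty(J_T, E)$ is inherited from that of $\hat\Psi$ since $\chi_\delta$ is fixed, and the inclusion $\Psi(\hat u) \in \Theta(h, u_0)$ is automatic because the function $v$ obtained from the convexified solution via the identity above provides a $C(J_T, H^k_\sigma)$ solution of~(\ref{E1.eq}) with this control, unique by Theorem~\ref{T:pert}(i).
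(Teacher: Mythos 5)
Your strategy is the paper's: substitute $v=u+\zeta$ to turn the convexified equation into the original one with control $\eta+\dot\zeta$, arrange $\zeta(0)=\zeta(T)=0$ so that initial and terminal states are preserved, and control the resulting perturbation of $\zeta$ via Theorem~\ref{T:pert} applied one Sobolev index higher (made legitimate by the prior reduction to $u_0\in H^{k+2}_\sigma$). Two steps, however, do not go through as written. The more serious one is the continuity of $\Psi$. The hypothesis only gives $\hat\Psi$ continuous for the norm of $L^1(J_T,H^k_\sigma)\times L^2(J_T,H^{k+1}_\sigma)$, and this gives no control on $\hat u\mapsto\dot\zeta_{\hat u}$; hence the term $\chi_\delta\dot\zeta_{\hat u}$ in $\Psi(\hat u)=\eta_{\hat u}+(\chi_\delta\zeta_{\hat u})'$ need not depend continuously on $\hat u$ in $L^1(J_T,H^k)$. (Think of $\zeta_{\hat u}(t)=a(\hat u)\sin\bigl(t/a(\hat u)^2\bigr)e$ with $a$ continuous and small: $\zeta_{\hat u}$ is small in $L^2$ while $\dot\zeta_{\hat u}$ is huge.) This is precisely why the paper does not cut off $\zeta_{\hat u}$ itself but replaces it by a fresh family $\zeta_n(\cdot,\hat u)$ vanishing at $0$ and $T$, chosen so that $\hat u\mapsto\zeta_n(\cdot,\hat u)$ is continuous into $C^1(J_T,H^{k+1}_\sigma)$ and $\|\zeta_n-\zeta\|_{L^2(J_T,H^{k+1})}\to0$: the continuity of the time derivative is then built into the construction. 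Your argument needs an analogous smoothing in $t$ (e.g.\ mollification) before the cutoff, not merely the fixed cutoff.

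The second issue is the index shift. Applying Theorem~\ref{T:pert} with $k$ replaced by $k+2$ requires the reference control to lie in $L^2(J_T,H^{k+3}_\sigma)$, which is not available since $E$ is only assumed to be a subspace of $H^{k+2}_\sigma$. The correct and sufficient choice is $k+1$: the hypotheses then ask for $\zeta\in L^2(J_T,H^{k+2}_\sigma)$ and a reference solution in $C(J_T,H^{k+1}_\sigma)$, both of which hold after the reduction, and statement (ii) then bounds the difference of solutions in $C(J_T,H^{k})$ — exactly the norm you need — by $\|\zeta-\chi_\delta\zeta\|_{L^2(J_T,H^{k+1})}$. Relatedly, a uniform bound on $\|\zeta_{\hat u}\|_{C(J_T,H^{k+2})}$ does not follow from compactness of $\hat\Psi(K)$ in $L^1\times L^2$; what you can and should use is that $L^2$-compactness of $\{\zeta_{\hat u}\}$ gives uniform square-integrability, hence $\sup_{\hat u}\|(1-\chi_\delta)\zeta_{\hat u}\|_{L^2(J_T,H^{k+1})}\to0$ as $\delta\to0$, and no rate is needed. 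With these two repairs your proof coincides with the paper's.
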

\begin{proposition}\label{P.2}
Eq.~(\ref{E2.conv}) with $(\eta, \zeta)\in C^\infty(J_T,E\times
E)$ is uniformly $(\e,u_0,K)$-controllable if and only if so is
Eq.~(\ref{E1.eq}) with $\eta_1\in C^\infty (J_T,E_1)$.
\end{proposition}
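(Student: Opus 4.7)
I plan to establish the equivalence by treating the two implications separately. Direction $(\Rightarrow)$ will follow immediately from Proposition~\ref{P.1}: if Eq.~(\ref{E2.conv}) with $(\eta,\zeta)\in C^\infty(J_T,E\times E)$ is uniformly $(\e,u_0,K)$-controllable, then by the $(\Leftarrow)$ half of Proposition~\ref{P.1} so is Eq.~(\ref{E1.eq}) with $\eta\in C^\infty(J_T,E)$, via some continuous $\Psi\colon K\to\Theta(h,u_0)\cap C^\infty(J_T,E)$; and since $E\subset E_1=\FF(E)$ by the definition of $\FF(E)$, the same $\Psi$, viewed as a map into $C^\infty(J_T,E_1)$, yields the desired controllability.

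For direction $(\Leftarrow)$, I start from a continuous $\Psi_1\colon K\to\Theta(h,u_0)\cap C^\infty(J_T,E_1)$ with $\sup_{\hat u\in K}\|\RR_T(u_0,\Psi_1(\hat u))-\hat u\|_k<\e$. The plan is to realise each control $\eta_1=\Psi_1(\hat u)\in E_1$ by a pair $(\eta_0,\zeta^N)\in E\times E$ in which $\zeta^N$ is zero-mean and fast-oscillating with period $T/N$, so that the solution $u^N$ of (\ref{E2.conv}) with $(\eta_0,\zeta^N)$ converges, as $N\to\infty$, to the solution $u$ of (\ref{E1.eq}) with $\eta_1$. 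First, I fix a basis $\{e_j\}$ of $E_1$ and, using the definition of $\FF(E)$, representations $\pm e_j=\eta^{\pm j}-\sum_i\alpha_i^{\pm j}B(\zeta^{\pm j,i})$ with $\eta^{\pm j},\zeta^{\pm j,i}\in E$ and $\alpha_i^{\pm j}>0$; expanding $\Psi_1(\hat u)(t)=\sum_jc_j(\hat u,t)e_j$, splitting each $c_j$ into positive and negative parts, and mollifying in $t$, I obtain a representation
\[
\Psi_1(\hat u)(t)=\eta_0(\hat u,t)-\sum_{k=1}^K\beta_k(\hat u,t)\,B(\xi^k)
\]
with $\eta_0(\hat u,\cdot)\in C^\infty(J_T,E)$, fixed $\xi^k\in E$, and $\beta_k\ge 0$ continuous in $\hat u$ and $C^\infty$ in $t$. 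Second, I partition $J_T$ into $N$ intervals $I_j$ of length $T/N$ and each $I_j$ into $2K$ equal sub-pieces, assigning the paired values $\pm\sqrt{K\beta_k(\hat u,t_j)}\,\xi^k$, so that on each $I_j$ the mean of $\zeta^N$ vanishes while the mean of $B(\zeta^N)$ equals $\sum_k\beta_k(\hat u,t_j)B(\xi^k)$; after a mollification in $t$, I obtain $\zeta^N\in C^\infty(J_T,E)$ continuous in $\hat u$ and set $\hat\Psi(\hat u):=(\eta_0,\zeta^N)$.

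For the convergence, the difference $w^N:=u^N-u$ satisfies, after integration,
\[
w^N(t)=-\int_0^t\bigl[B(w^N)+\tilde B(w^N,u)\bigr](s)\,ds-V^N(t)-W^N(t),
\]
where $W^N(t):=\int_0^t\bigl[B(\zeta^N)-\sum_k\beta_k B(\xi^k)\bigr]\,ds$ tends to $0$ in $C(J_T,H^{k-1})$ uniformly in $\hat u$ by construction, and $V^N(t):=\int_0^t\tilde B(u^N,\zeta^N)\,ds$ will tend to $0$ by integration by parts (the primitive $\int_{t_j}^s\zeta^N\,d\sigma$ is $O(1/N)$ by the mean-zero property) together with uniform bounds on $u^N$ and $\dot u^N$ coming from Theorem~\ref{T:pert} and the bilinear estimate~(\ref{E1:B1}); a Gronwall argument then yields $\|w^N\|_{C(J_T,H^{k-1})}\to 0$ uniformly in $\hat u\in K$, and a regularization trick in the spirit of Lemma~\ref{R.1} transfers the estimate from $H^{k-1}$ into $H^k$, as required by the definition of uniform $(\e,u_0,K)$-controllability.

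The hard part will be the averaging of the cross-term $\tilde B(u^N,\zeta^N)$: the construction of $\zeta^N$ must simultaneously achieve mean-zero oscillation on each short sub-interval, exact reproduction of the quadratic average $\sum_k\beta_kB(\xi^k)$, continuous dependence on $\hat u$, and $C^\infty$ regularity in $t$ after mollification, all without destroying the averaging identities to leading order as $N\to\infty$.
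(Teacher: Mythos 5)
Your overall strategy is the right one and, in its essentials, coincides with the paper's: the direction $(\Rightarrow)$ is obtained exactly as in the paper from Proposition~\ref{P.1} and the inclusion $E\subset E_1$, and for $(\Leftarrow)$ both arguments rest on the same device of realising each drift term $-\beta\,B(\xi)$ by a fast oscillation through the paired values $\pm\sqrt{\beta}\,\xi$, so that the average of $B(\zeta^N)$ reproduces the drift while the average of $\zeta^N$ vanishes. The organisational differences --- you avoid the time-discretisation of Lemma~\ref{L3.1} and the intermediate relaxed equation obtained from Lemma~\ref{L:L}, working instead directly with a time-dependent decomposition $\Psi_1(\hat u)(t)=\eta_0-\sum_k\beta_k(\hat u,t)B(\xi^k)$ --- are legitimate and arguably more direct.

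The genuine gap is in the convergence step. You estimate $w^N=u^N-u$ by Gronwall, invoking ``uniform bounds on $u^N$ and $\dot u^N$ coming from Theorem~\ref{T:pert}''. But Theorem~\ref{T:pert} yields existence and bounds only for data that are $\delta$-close, in the norms of (\ref{E1:delt2}), to the data of a known reference solution; here $\zeta^N$ is $O(1)$ in $L^2(J_T,H^{k})$ and is close neither to $0$ nor to any fixed $\xi$ in that norm (only its primitive is small), so the theorem does not apply to the triple $(u_0,\zeta^N,h+\eta_0)$ with reference $(u_0,0,h+\eta_1)$, and for 3D Euler one cannot simply assume that $u^N$ exists on all of $J_T$ with uniform bounds --- which is precisely what your integration by parts and Gronwall step presuppose. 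This is the point the paper's proof is built to circumvent: one observes that $v_n:=u_1-\KK f_n$ is an \emph{exact} solution of (\ref{E2.conv}) with the control pair $(\eta,\zeta_n+\KK f_n)$, and since $\|\KK f_n\|_{C(J_T,H^{k+1})}\to0$ the hypotheses of Theorem~\ref{T:pert} are satisfied with $v_n$ as reference; existence of $\RR(u_0,\zeta_n,\eta)$, its uniform bounds, and the smallness of the difference (cross term $\tilde B(u^N,\zeta^N)$ included) then all follow from the perturbative theorem at once. Your direct route could likely be repaired by a continuation argument (work on the maximal interval where $\|w^N\|_k\le1$, close the estimate there, conclude the interval is all of $J_T$ for large $N$), but as written the a priori bounds are asserted rather than proved, and this is where the argument fails. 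A secondary point: passing from the resulting $H^{k-1}$ estimate to the $H^k$ estimate required by the definition is not achieved by mollifying $\hat u$ alone; one must run the whole scheme one derivative higher, which is why the paper first reduces to $u_0\in H^{k+2}_\sigma$ at the start of Section~\ref{S:5} and to $K\subset H^{k+1}_\sigma$ via Lemma~\ref{R.1}.
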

\begin{proof}[Proof of Proposition \ref{P.1}]$ $
We show that if  (\ref{E2.conv}) with $(\eta, \zeta)\in
C^\infty(J_T,E\times E)$ is uniformly $(\e,u_0,K)$-controllable,
then so is   (\ref{E1.eq}) with $\eta \in C^\infty(J_T,E)$.
 Let $$\hat{\Psi}:K\rightarrow  \hat{\Theta}(h,u_0) \cap C^\infty(J_T,E\times E), \qquad
\hat{\Psi}(\hat{u})=\big(\eta(t,\hat{u}),\zeta(t,\hat{u})\big)$$
be such that
\begin{eqnarray} \label{E5.epsglx}
\hat\e:=\sup_{\hat{u}\in K} \|\RR_T(u_0,
\hat{\Psi}(\hat{u}))-\hat{u}\|_k<\e.
\end{eqnarray}
 Let us choose  $\zeta_n(\cdot,\hat u)\in C^\infty(J_T,E)$  such that $\zeta_n(0)=\zeta_n(T)=0$, the mapping $\zeta_n(\cdot,\cdot):\hat u
  \rightarrow  \zeta_n(\cdot,\hat u)$ from $K$ to $C^1(J_T,H_\sigma^{k+1})$ is continuous  and
$$\|\zeta_n-\zeta\|_{L^2(J_T,H^{k+1})} \rightarrow 0 \,\, \text {as} \,\,
n\rightarrow \infty.$$ By Theorem \ref{T:pert}, for sufficiently
large $n$ we have
\begin{eqnarray}\label{E4.5.1}
\sup_{\hat u \in K}\|
\RR_T(u_0,\zeta_n(\hat{u}),\eta)-\RR_T(u_0,\hat{\Psi}(\hat u))
\|_k< \e -\hat\e.
\end{eqnarray}
Define  $\Psi_n(t,\hat u)=\eta(t,\hat{u})-\dot
{\zeta}_n(t,\hat{u}) $. It is easy to see that
$\Psi_n(\cdot,\cdot):\hat{u}\rightarrow \Psi_n(\cdot,\hat{u})$ is
a continuous mapping from $K$ to $L^1(J_T,H_\sigma^k)$. Clearly,
$$\RR (u_0,\zeta_n(\hat{u}),\eta)=\RR (u_0,\Psi_n(\hat{u}))-\zeta_n(\hat{u}).$$
Using the fact that $\zeta_n(T)=0$, (\ref{E4.5.1}) and
(\ref{E5.epsglx}), we derive
$$\sup_{\hat u \in K}\| \RR_T(u_0,\Psi_n(\hat{u}),\eta)-\hat u \|_k<  \e -\hat\e+ \sup_{\hat u \in K}\| \RR_T(u_0,\hat{\Psi}(\hat u))-\hat u \|_k< \e, $$
which completes the proof of Proposition \ref{P.1}.
\end{proof}

\begin{proof}[Proof of Proposition \ref{P.2}]$ $
By Proposition \ref{P.1} and   the fact $E\subset E_1$, if
Eq.~(\ref{E2.conv}) is uniformly $(\e,u_0,K)$-controllable, then
so is  Eq.~(\ref{E1.eq}) with $\eta\in C^\infty (J_T,E_1)$. We
need to prove the converse assertion. We assume that there  is a
continuous mapping
$$\Psi_1:K\rightarrow  {\Theta}(h,u_0)\cap L^1(J_T,E_1) $$  such that
\begin{eqnarray}
\hat\e:=\sup_{\hat{u}\in K} \|\RR_T(u_0,
\Psi_1(\hat{u}))-\hat{u}\|_k<\e.\nonumber
\end{eqnarray}
We approximate $\RR_T(u_0, \Psi_1(\hat{u}))$ by a solution
$u(t,\hat{u})$ of problem (\ref{E2.conv}), (\ref{E1.eulic}) with
some $\eta(t,\hat{u})$, $\zeta(t,\hat{u})\in C^\infty( J_T, E )$
such that $(\eta(t,\hat{u}), \zeta(t,\hat{u})) $
 depends continuously on $\hat{u}\in K$.

  \vspace{6pt}  \textbf{Step 1.}
We first approximate $ \Psi_1(\hat{u})$ by a family of piecewise
constant controls. Let us introduce a finite set $A = \{\eta^l_1
\in E_1 , l = 1, \ldots ,m\}$. For any integer $s$, we denote by
$P_s(J_T ,A)$ the set of functions

\begin{align}
\eta_1(t)=\sum_{l=1}^m\ph_l(t)\eta_1^l \,\, \text { for }\,\,
t\in[0,T],\nonumber
\end{align}
where  $\ph_l$ are non-negative functions such that
$\sum_{l=1}^m\ph_l(t)=1$, $$\ph_l(t)=\sum_{r=0}^{
s-1}c_{l,r}I_{r,s}(t)\,\, \text { for }\,\,t\in[0,T],$$and
$I_{r,s}$ is the indicator function of the interval
$[t_r,t_{r+1})$ with $t_r=rT/s$.

We define a metric in $P_s(J_T ,A)$ by
$$ d_P(\eta_1,\zeta_1)=\sum_{l=1}^m \| \ph_l-\psi_l\|_{L^{\infty}(J_T)}, \,\, \eta_1,\zeta_1\in P_s(J_T ,A),$$
where $\{\ph_l\}$ and $\{\psi_l\}$ are the functions corresponding
 to $\eta_1$ and $\zeta_1$, respectively. We shall need
the following lemmas, which are proved at the end of this section.

\begin {lemma}\label{L3.1}
If Eq.~(\ref{E1.eq}) with $\eta\in C^\infty (J_T,E_1)$ is
uniformly $(\e,u_0,K)$-controllable, then there is a finite set $A
= \{\eta_1^l , l = 1,  \ldots ,m\}   \subset E_1$, an integer
$s\ge1$ and a  mapping $\Psi_s:K\rightarrow P_s(J_T ,A)$
continuous with respect to the metric of $P_s(J_T ,A)$ such that
$\Psi_s(K)  \subset \Theta (u_0,h)$ and
\begin{eqnarray}
\sup_{\hat{u}\in K} \|\RR_T(u_0,
\Psi_s(\hat{u}))-\hat{u}\|_k<\e.\nonumber
\end{eqnarray}
\end {lemma}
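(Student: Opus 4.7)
The plan is to start from the continuous mapping $\Psi_1 : K \to \Theta(h,u_0) \cap C^\infty(J_T, E_1)$ providing uniform $(\hat\e, u_0, K)$-controllability with some $\hat\e < \e$, and to produce $\Psi_s$ by two successive discretizations: a temporal one, which replaces $\Psi_1(\hat u)$ by its average on each interval $[t_r, t_{r+1})$, and a combinatorial one, which rewrites the (continuously varying) constant values as convex combinations over a fixed finite set $A \subset E_1$. Smallness of the discretized control with respect to $\Psi_1(\hat u)$ in $L^1(J_T, H^{k-1}_\sigma)$, combined with Theorem~\ref{T:pert}(ii), will ensure that the resulting trajectories satisfy the required $\e$-bound. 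The second step is an exact rewriting and does not change the control at all.

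For the temporal step, define the averaging operator
\begin{equation*}
(A_s f)(t) = \frac{s}{T}\int_{t_r}^{t_{r+1}} f(\tau)\,d\tau, \qquad t \in [t_r, t_{r+1}).
\end{equation*}
This is a linear $L^1$-contraction with $A_s f \to f$ in $L^1(J_T, E_1)$ for every $f$, and the convergence is uniform on compact subsets. Since $K$ is compact and $\Psi_1$ is continuous into $L^1(J_T, E_1)$, the image $\Psi_1(K)$ is compact, whence $\sup_{\hat u \in K}\|A_s\Psi_1(\hat u) - \Psi_1(\hat u)\|_{L^1(J_T, E_1)} \to 0$. Finite-dimensionality of $E_1$ transfers this convergence to $L^1(J_T, H^{k-1}_\sigma)$. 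For $s$ large enough, Theorem~\ref{T:pert} applied with the $C(J_T, H^k)$-bounded reference family $\{\RR(u_0,\Psi_1(\hat u)) : \hat u \in K\}$ yields $A_s\Psi_1(\hat u) \in \Theta(h, u_0)$ and
\begin{equation*}
\sup_{\hat u \in K}\|\RR_T(u_0, A_s\Psi_1(\hat u)) - \RR_T(u_0, \Psi_1(\hat u))\|_k < \e - \hat\e.
\end{equation*}

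For the combinatorial step, write $A_s\Psi_1(\hat u) = \sum_{r=0}^{s-1} I_{r,s}\,v_r(\hat u)$, where $v_r(\hat u) \in E_1$ depends continuously on $\hat u$. The set $V := \{v_r(\hat u) : \hat u \in K,\ 0 \le r \le s-1\}$ is compact in the finite-dimensional space $E_1$, so I fix a simplex in $E_1$ whose interior contains $V$ and take $A = \{\eta_1^1, \ldots, \eta_1^m\}$ to be its $m = \dim E_1 + 1$ vertices. Each $v \in V$ then has unique non-negative barycentric coordinates $(c_1(v), \ldots, c_m(v))$ with $\sum_l c_l(v) = 1$, depending continuously on $v$. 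Setting $c_{l,r}(\hat u) := c_l(v_r(\hat u))$ and $\varphi_l(t) := \sum_r c_{l,r}(\hat u) I_{r,s}(t)$ gives $\Psi_s(\hat u) := \sum_l \varphi_l \eta_1^l = A_s\Psi_1(\hat u)$, which belongs to $P_s(J_T, A)$. Continuity of $\Psi_s$ in the $d_P$-metric reduces to continuity of $\hat u \mapsto c_{l,r}(\hat u)$, which follows from continuity of $v_r$ and of the barycentric map.

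The main obstacle is the combinatorial step: for a generic finite set $A \subset E_1$, a point in $\mathrm{conv}(A)$ admits many representations as a convex combination, and a continuous selection of convex-combination coefficients is not guaranteed. Choosing $A$ as the vertices of a simplex containing $V$ makes the barycentric coordinates unique and hence automatically continuous, circumventing this selection difficulty.
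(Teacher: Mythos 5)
Your proof is correct, and it reaches the same destination as the paper's by the same overall strategy (discretize in time, then realize the piecewise-constant control as a point of $P_s(J_T,A)$ with continuously varying convex coefficients), but the two internal steps are organized differently. The paper does the convex-combination rewriting \emph{first} and on the continuous control: it fixes an orthonormal basis $\{e_1,\dots,e_d\}$ of $E_1$, sets $M=\max_{l,t,\hat u}|\langle\Psi_1(t,\hat u),e_l\rangle|$, takes $A=\{\pm dMe_l\}$ (the $2d$ vertices of a scaled cross-polytope), and writes $\Psi_1(t,\hat u)=\sum_l\tilde\xi_l(t,\hat u)\eta_1^l$ with explicit affine, hence continuous, non-negative coefficients summing to $1$; it then discretizes by \emph{sampling} the coefficients $\tilde\xi_l$ at the grid points $rT/s$ and invokes uniform continuity on $J_T\times K$. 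You instead discretize first by \emph{averaging} over each subinterval and then obtain the coefficients as barycentric coordinates with respect to a simplex containing the compact set of averaged values; uniqueness of barycentric coordinates is your substitute for the paper's explicit affine formula, and both devices solve the same selection problem (a continuous choice of convex coefficients) equally well. Neither ordering buys anything essential over the other; your averaging step is marginally more robust (it only needs $\Psi_1(\hat u)\in L^1$, not continuity in $t$), while the paper's choice of $A$ is more explicit. One small index point you should tighten: Theorem~\ref{T:pert}(ii) as stated controls the trajectories only in $C(J_T,H^{k-1})$ when the forces are close in $L^1(J_T,H^{k-1})$, whereas the lemma requires closeness in $H^k$. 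Since $E_1\subset H^{k+2}_\sigma$ is finite-dimensional, your convergence $A_s\Psi_1(\hat u)\to\Psi_1(\hat u)$ in $L^1(J_T,E_1)$ is automatically convergence in $L^1(J_T,H^{k+1})$, and thanks to the reduction to $u_0\in H^{k+2}_\sigma$ made at the start of Section~\ref{S:5} you may apply the perturbation theorem one derivative higher to get the $C(J_T,H^{k})$ estimate; the paper does exactly this implicitly by measuring $\|\Psi_1-\Psi_s\|_{k+2}$.
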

\begin {lemma}\label{L:L}
 Let $E \subset H_\sigma^{k+2}$ be a finite-dimensional space and $E_1 = \FF(E)$. Then
for any $\eta_1 \in E_1$ there are vectors $\zeta^1, \ldots ,
\zeta^p, \eta \in E$ and positive constants $\lambda_1, \ldots ,
\lambda_p$  whose sum is equal to 1 such that
\begin{equation}
B(u)-\eta_1= \sum_{j=1}^p \lambda_j B(u+\zeta^j) -\eta\quad
\text{for any u }\in H^1.\nonumber
\end{equation}
 \end {lemma}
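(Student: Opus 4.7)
The plan is to unwind the definition of $\FF(E)$ and then construct the desired identity by exploiting the bilinearity of $B$ together with the symmetry $B(-v) = B(v)$ (which holds because $B(v) = B(v,v)$ is quadratic in~$v$). By the definition of $E_1 = \FF(E)$ recalled in (\ref{E2:sahmnf}), there exist $\tilde\eta, \xi^1, \ldots, \xi^n \in E$ and positive constants $\alpha_1, \ldots, \alpha_n$ such that $\eta_1 = \tilde\eta - \sum_{i=1}^n \alpha_i B(\xi^i)$.

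Expanding $B(u+\zeta^j) = B(u) + \tilde B(u,\zeta^j) + B(\zeta^j)$ with $\tilde B(a,b) := B(a,b) + B(b,a)$ and imposing $\sum_j \lambda_j = 1$, I would write
$$\sum_{j=1}^p \lambda_j B(u+\zeta^j) = B(u) + \tilde B\!\left(u, \sum_{j=1}^p \lambda_j \zeta^j\right) + \sum_{j=1}^p \lambda_j B(\zeta^j).$$
So to secure the target identity for every $u \in H^1$, it suffices to find $\zeta^j, \eta \in E$ and $\lambda_j > 0$ satisfying $\sum_j \lambda_j = 1$, $\sum_j \lambda_j \zeta^j = 0$ (to kill the $u$-linear cross term), and $\sum_j \lambda_j B(\zeta^j) = \eta - \eta_1$ (to match the constant).

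The key trick is to group the $\zeta^j$'s in antipodal pairs. Taking $p = 2n$, $c := \sum_{i=1}^n \alpha_i > 0$, and setting $\zeta^{2i-1} = \sqrt{c}\,\xi^i$, $\zeta^{2i} = -\sqrt{c}\,\xi^i$ with equal weights $\lambda_{2i-1} = \lambda_{2i} = \alpha_i/(2c)$, the antipodal cancellation immediately gives $\sum_j \lambda_j \zeta^j = 0$ and the total weight $\sum_j \lambda_j = \sum_i \alpha_i/c = 1$; meanwhile $B(\pm \sqrt{c}\,\xi^i) = c\, B(\xi^i)$ yields $\sum_j \lambda_j B(\zeta^j) = \sum_i \alpha_i B(\xi^i) = \tilde\eta - \eta_1$, so $\eta := \tilde\eta \in E$ does the job. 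The argument is essentially algebraic; the only real obstacle is spotting this antipodal parametrization that simultaneously kills the $u$-linear cross term and, thanks to the quadratic scaling of $B$, recovers exactly the required constant $\tilde\eta - \eta_1$ while keeping all weights positive and summing to one.
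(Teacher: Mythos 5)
Your proposal is correct and is essentially identical to the paper's proof: the paper also unwinds the definition of $\FF(E)$, takes $p=2n$, sets $\zeta^i=-\zeta^{i+n}=\sqrt{\alpha}\,\xi^i$ with $\alpha=\sum_i\alpha_i$ and weights $\lambda_i=\lambda_{i+n}=\alpha_i/(2\alpha)$, and relies on the same antipodal cancellation of the cross term together with the quadratic scaling of $B$. The only differences are notational (indexing the pairs as $(2i-1,2i)$ instead of $(i,i+n)$) and that you spell out the verification the paper leaves implicit.
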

Let $\Psi_s$ be the function constructed in Lemma \ref{L3.1}:
$$ \Psi_s(\hat{u})=\sum_{l=1}^m\ph_l(t,\hat u)\eta_1^l.$$

As $\eta_1^l\in E_1$, by Lemma \ref{L:L}, there are vectors
$\zeta^{l,1}, \ldots , \zeta^{l,p}, \eta^{l} \in E$ and positive
constants $\lambda_{l,1}, \ldots , \lambda_{l,p}$  whose sum is
equal to 1 such that
\begin{equation}\label{E.lemma1}
B(u)-\eta_1^l= \sum_{j=1}^p \lambda_{l,j} B(u+\zeta^{l,j})
-\eta^{l}\quad \text{for any $u$ }\in H^1.
\end{equation}
Let $u_1=\RR(u_0,\Psi_s(\hat{u}))$. It follow from
(\ref{E.lemma1})  that $u_1$ satisfies the equation
\begin{eqnarray}\label{E3.aftleem1}
\dot{u}_1+\sum_{j=1}^p\sum_{l=1}^m \lambda_{l,j}\ph_l(t,\hat{u})
B(u_1+\zeta^{l,j}) =h(t)+\sum_{l=1}^m\ph_l(t,\hat{u})\eta^l.
\end{eqnarray}
We can rewrite Eq.~(\ref{E3.aftleem1}) in the form
\begin{eqnarray}\label{E3.aftleem2}
\dot{u}_1+\sum_{i=1}^q \psi_i(t,\hat{u}) B(u_1+\zeta^{i})
=h(t)+\eta (t,\hat{u}),
\end{eqnarray}
where $\zeta^{i}\in E$ for $ i=1,\ldots,q$,
$\eta(t,\hat{u})=\sum_{l=1}^m\ph_l(t,\hat{u})\eta^l$
 such that
\begin{eqnarray}
\psi_i(t,\hat{u})=\sum_{r=0}^{s-1}
d_{i,r}(\hat{u})I_{r,s}(t),\,\,\,\sum_ {i=1}^qd_{i,r}=1\nonumber
\end{eqnarray}
for some non-negative functions $d_{i,r}\in C(K)$.

  \vspace{6pt} \textbf{Step 2.} We  approximate $u_{1}$ by a solution
  of problem (\ref{E2.conv}), (\ref{E1.eulic}). First we assume $s=1$. In this case
(\ref{E3.aftleem2}) becomes
\begin{eqnarray}\label{E3.case1}
\dot{u}_1+\sum_{i=1}^q d_i(\hat{u}) B(u_1+\zeta^{i}) =h(t)+\eta
(\hat{u}),
\end{eqnarray}
where $d_i\in C(K)$ and $\eta\in C(K,E) $.  Let
$\zeta_n(t,\hat{u})=\zeta(\frac {nt}{T},\hat{u})$, where
$\zeta(t,\hat{u})$  is a $1$-periodic function such that
$$\zeta(s,\hat{u})=\zeta^j\text{ for }0\le
s-(d_1(\hat{u})+\ldots+d_{j-1}(\hat{u}))<d_j(\hat{u}),\quad
j=1,\ldots ,q,$$ where $d_0(\hat{u})=0$. Eq.~(\ref{E3.case1}) is
equivalent to the  equation
\begin{equation}
\dot u_1   + B(u_1 + \zeta_n(t,\hat{u}))  = h(t) + \eta(t,\hat{u})
+ f_n(t,\hat{u}),\nonumber
\end{equation}
where
\begin{equation}\label{E3.fnitesq}
f_n(t,\hat{u})= B(u_1 + \zeta_n(t,\hat{u})) -\sum_{i=1}^q d_i(\hat{u}) B(u_1+\zeta^{i}) .
\end{equation}
 Let us define
$$\KK g(t)=\int_0^t  g(s)\dd s.$$
Then $v_n = u_1 - \KK f_n $ is a solution of the problem
\begin{align}
\dot v_n +B (v_n + \zeta_n(t,\hat{u})+ \KK f_n(t,\hat{u})) & = h(t)
+ \eta(t,\hat{u}),\nonumber\\
v_n&=u_0.\nonumber
 \end{align}
Suppose we have shown that
\begin{equation}\label{E3.fk}
 \sup_{\hat{u}\in K} \|\KK f_n(t,\hat{u})\|_{C(J_T,H^{k+1})}\rightarrow 0.
\end{equation}
Then  $v_n$ satisfies
\begin{equation}
  \|v_n-u_1\|_{C(J_T,H^{k+1})}\rightarrow 0 \,\, \text{as} \,\,
  n\rightarrow \infty.\nonumber
\end{equation}
There is an integer $n_0\ge 1$ such that if $n\ge n_0$
\begin{equation}
 \sup_{\hat{u}\in K}\|\RR(u_0,\zeta_n(\hat{u}),\eta(\hat{u}))-u_1(\cdot,\hat u)\|_{C(J_T,H^k)}<\e-\hat\e.\nonumber
\end{equation}
 Then the operator
$$\hat{\Psi}_n:K \rightarrow L^1(J_T,E)\times L^2(J_T,E),\,\,\, \hat{u}\rightarrow (\eta(\hat{u}),\zeta_n(\hat{u}))$$
satisfies (\ref{E2.unif1}).

 To finish the proof of Proposition \ref{P.2} in
 the case $s=1$, it suffices to prove (\ref{E3.fk}). Suppose  we have
shown that
\begin{equation}\label{E3.fk2}
 \|\KK f_n(t,\hat{u})\|_{C(J_T,H^{k+1})}\rightarrow 0 \,\,\, \text{for any } \, \hat{u}\in
 K.
\end{equation}
To prove (\ref{E3.fk}), by the Arzel\`{a}--Ascoli theorem, it
suffices to show that the family $\{\hat{u}\rightarrow \KK
f_n(\cdot,\hat{u})\}$ is uniformly equicontinuous from $K$ to
$C(J_T, H_\sigma^{k+1})$. By (\ref{E3.fnitesq}), it suffices to
show that so is $\hat{u}\rightarrow \zeta_n(\hat{u})$ from $K$ to
$L^1(J_T, H_\sigma^{k+2})$. The definition of $\zeta_n $ implies
\begin{align}
 &\|\zeta_n(\cdot,\hat{u}_1)-\zeta_n(\cdot,\hat{u}_2)\|^2_{L^2(J_T,H^{k+2})}\le
 \int_0^T\| \zeta (\frac {nt}{T},\hat{u}_1)-\zeta(\frac
 {nt}{T},\hat{u}_2)\|_{{k+2}}^2\dd t\nonumber\\&= \frac{T}{n} \int_0^n\| \zeta
(t,\hat{u}_1)-\zeta(t,\hat{u}_2)\|_{{k+2}}^2\dd
 t\le C\sum_{i=1}^q|d_i(\hat{u}_1)-d_i(\hat{u}_2)|.\nonumber
\end{align}
The uniform  continuity of $d_i$ over $K$ gives us the required
result.

 \vspace{6pt} \textbf{Step 3.} To complete the proof of  Proposition \ref{P.2} in the case $s=1$, it remains to prove
 (\ref{E3.fk2}).  If we show that for any piecewise constant $H_\sigma^{k+2}$-valued function $u_1$
on $J_T$, the sequence $\{\KK f_n\}$ converges to zero in the
space $C(J_T ,H_\sigma^{k+1})$, then  an approximation argument
shows (\ref{E3.fk2}) for any $u_1 \in C(J_T,H_\sigma^{k+2})$.

The family $\{\KK f_n\}$ is relatively compact in the space $C(J_T
,H_\sigma^{k+1})$ for any piecewise constant function $u_1$.
Indeed, the set $f_n(t), t \in J_T$ is contained in a finite
subset of $H_\sigma^{k+1}$ not depending on $n$. Thus, there is a
compact set $G \subset H_\sigma^{k+1}$ such that
$$ \KK f_n(t) \in G \text{ for all }t \in J_T ,  n \ge 1.$$
As
\begin{equation}
\sup_{n\ge1}\| f_n\|_{C(J_T ,H^{k+1})} < \infty,\nonumber
\end{equation}
the family $\{\KK f_n\}$ is uniformly equicontinuous on $J_T$.
Thus, by the Arzel\`{a}--Ascoli theorem, $\{\KK f_n\}$ is
relatively compact. Therefore  convergence (\ref{E3.fk2})  will be
established if we show that
\begin{equation}\label{E3:prc}
\KK f_n(t) \rightarrow 0 \text{ in }  H_\sigma^{k+1} \text{ for
any } t \in J_T .
\end{equation}
To prove (\ref{E3:prc}), we first assume that $u(t)=b\in
H_\sigma^{k+2}$ for all $t\in J_T$. Let $t=t_l+\tau$, where
$t_l=\frac{lT}{n}$, $l\in \N$ and $\tau\in [0,\frac{T}{n})$. From
the definition of $\zeta_n(t)$ we have
\begin{eqnarray}
\int_0^{\frac{lT}{n}}f_n(s)\dd s= \int_0^{\frac{lT}{n}}\bigg(B(b +
\zeta_n(t)) \bigg)\dd s-\frac{lT}{n}\sum_{j=1}^p \lambda_jB
(b+\zeta^j) =0, \nonumber
\end{eqnarray}
so
\begin{equation}
\KK f_n(t)=-\tau \sum_{j=1}^p \lambda_jB (b+\zeta^j)  - \int_0^\tau
B(b + \zeta_n(s))\dd s.\nonumber
\end{equation}
Since  $\tau \rightarrow 0 $ as $n \rightarrow \infty $, we arrive
at (\ref{E3:prc}). In the same way, we can show that
(\ref{E3:prc}) holds for any piecewise constant function $u$.

The case $s\ge2$ is deduced from the case $s=1$ exactly in the
same way as in \cite[section 3.3]{shi2}.
\end{proof}
\begin{proof}[Proof of Lemma \ref{L3.1}]$ $
Let $\{e_1,\ldots,e_d\}$ be an orthonormal basis in $E_1$ with
respect to scalar product $\langle \cdot,\cdot\rangle$  and
$\xi_l(t, \hat u):=\langle\Psi_1(t,\hat u), e_l\rangle $ for
$l=1,\ldots,d$.  Let us define for $M=\max_{l,t,\hat
u}|\xi_l(t,\hat u)|$ and $m=2d$ the vectors
$$\eta_1^l=dMe_l\,\,\text{for $l=1,\ldots,d$,} \,\, \eta_1^l=-dMe_l\,\,\text{for $l=d+1,\ldots,m$. }$$
We can see that the functions $$ \tilde{\xi}_l(t,\hat
u)={\frac{1}{2d}\big(1+\frac{\xi_l(t,\hat
u)}{M}}\big),\,\,\tilde{\xi}_{l+d}(t,\hat
u)={\frac{1}{2d}\big(1-\frac{\xi_l(t,\hat u)}{M}}\big)\,\,
\text{for $l=1,\ldots,d$  } $$ are non-negative, their sum is
equal to 1, and they satisfy the relation
$$\Psi_1(t,\hat u)=\sum _{l=1}^m\tilde{\xi}_{l }(t,\hat
u)\eta_1^l.  $$
Let us define an operator $\Psi_s :K \rightarrow
P_s(J_T ,A)$ with $A=\{\eta_1^l, l=1,\ldots,m \}$ as
$$\Psi_s(t,\hat u)=\sum_{l=1}^m\tilde{\xi}_{l }(\frac{rT}{s},\hat
u)\eta_1^l\,\,\text{for} \,\, t\in
[\frac{rT}{s},\frac{(r+1)T}{s}).$$
 Since $\tilde{\xi}_l \in C(J_T \times K)$ and $K\subset H_\sigma^{k+2}$ is
 compact, we have
$$ \sup_{\hat u \in K}\|\Psi_1(t,\hat u)-\Psi_s (t,\hat u)\|_{k+2}=\sup_{\hat u \in K}\|\sum _{l=1}^m\big(\tilde{\xi}_{l }(t,\hat
u)-\tilde{\xi}_{l }(\frac{rT}{s},\hat
u)\big)\eta_1^l\|_{k+2}\rightarrow0 \,\,\text{as}\,\, s\rightarrow
\infty.$$ Thus for sufficiently large $s$, we have $\Psi_s(K)\subset
\Theta(h,u_0)$, and $$\sup_{\hat u \in K}\|\RR_T(u_0,\Psi_s(\hat
u))-\RR_T(u_0,\Psi_1(\hat u))\|_k< \e.$$ Hence (\ref{E2.conv}) is
uniformly $(\e,u_0,K)$-controllable with $\eta\in P_s(J_T,A).$

\end{proof}

\begin {proof}[Proof of Lemma \ref{L:L}]
By the definition of $\FF(E)$, for any $\eta_1\in \FF(E)$ there
are $\xi^1,\ldots ,\xi^n, \eta \in E$ and positive constants $
\alpha_1,\ldots ,\alpha_n$ such that $$
\eta_1=\eta-\sum_{i=1}^n\alpha_iB(\xi^i).$$ Let us set $p = 2n,$
$\alpha=\alpha_1+\ldots+\alpha_n$,
$$
\lambda_i = \lambda_{i+n} = \frac {\alpha_i}{2\alpha},
\quad\zeta^i = -\zeta^{i+n} = \sqrt{\alpha}\xi^i, \quad i = 1,
\ldots ,n.
$$
Then we have
\begin{equation}
B(u) -\eta_1= \sum_{j=1}^p \lambda_jB(u+\zeta^j) -\eta\quad
\text{for any $u$ }\in H_\sigma^1.\nonumber
\end{equation}
\end{proof}

\section{Non controllability result}\label{S:6}
 Let us denote by $A_T(u_0,h,E)$ the set of
attainability at time $T$ from $u_0\in H_\sigma^k$ by $E$-valued
controls, i.e., $$A_T(u_0,h,E)=\{\hat u\in H_\sigma^k:\,\,\hat
u=\RR_T(u_0,\eta) \,\, \text{for some} \,\, \eta\in \Theta(u_0,h)
\}.$$   In this section, we show that the ideas of \cite{shi3} can
be generalized
 to prove that the set $A(u_0,h,E)=\cup_{T\in [0,\infty)}
A_T(u_0,h,E)$   does not contain a ball of $H_\sigma^{k+\gamma}$,
$\gamma<2$ in the three-dimensional  case.

 Let us recall the definition of Kolmogorov
$\e$-entropy (see \cite{lor}). For any $\e > 0$, we denote by
$N_\e(K)$ the minimal number of sets of diameters not exceeding
$2\e$ that are needed to cover $K$. The Kolmogorov $\e$-entropy of
$K$ is defined as $H_\e(K) = \ln N_\e(K)$.

 Let us
consider the equation
\begin{equation}\label{E4.hamar}
\dot{v}+ B(v+z)=h.
\end{equation}
We fix an integer $k\ge4$ and denote by $\Theta_{t}(h,u_0)$ the
set of functions $\eta\in L^1(J_t,H_\sigma^k)$ for which
(\ref{E4.hamar}), (\ref{E1.eulic}) with $z(t)=\int_0^t\eta(s)\dd
s$ has a unique solution $v\in C(J_t, H_\sigma^k)$. We note that
$$\RR_t(u_0,\eta)=v(t)+z(t),$$
where $z(t)=\int_0^t\eta(s)\dd s$. The following theorem is the
main result of this section.
\begin{theorem}
 Let $k\ge 4$, $u_0\in H_\sigma^k$, $h\in C([0,\infty), H_\sigma^k)$ and $E\subset H_\sigma^k$ be any finite-dimensional subspace.
For  any   $\gamma \in [0,2)$ and any ball $Q\subset
H_\sigma^{k+\gamma}$, we have
\begin{equation}
 A^c(u_0,h,E)\cap Q \neq \emptyset,\nonumber
\end{equation}
where $A^c(u_0,h,E)$ is the complement of $A(u_0,h,E)$ in the
space $H_\sigma^{k}$.
\end{theorem}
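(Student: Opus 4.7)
The plan is to adapt the Kolmogorov--entropy comparison argument of \cite{shi3} from the 2D to the 3D setting. The key dichotomy is: on $\T^3$, any ball $Q\subset H_\sigma^{k+\gamma}$ has $\varepsilon$-entropy of order $\varepsilon^{-3/\gamma}$ in the $H^k_\sigma$-norm, while I claim that $A(u_0,h,E)\cap Q$ has $\varepsilon$-entropy of at most order $\varepsilon^{-3/2}$ (up to $\log$-factors). Since $\gamma<2$ implies $3/\gamma>3/2$, the two rates are incompatible for small $\varepsilon$, forcing $Q\not\subset A(u_0,h,E)$, equivalently $A^c(u_0,h,E)\cap Q\neq\emptyset$.

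The lower bound $H_\varepsilon(Q,H^k)\gtrsim\varepsilon^{-3/\gamma}$ is classical and follows from a Fourier packing of a ball in $H_\sigma^{k+\gamma}$ in the $H^k$-norm on $\T^3$. For the upper bound on $H_\varepsilon(A(u_0,h,E)\cap Q,H^k)$, I would write $A(u_0,h,E)=\bigcup_{n\in\N}A_{[0,n]}(u_0,h,E)$ and treat each $n$ separately. For $\hat u\in A_T\cap Q$ with $T\leq n$, I use the decomposition $\hat u=v(T)+z(T)$ from (\ref{E4.hamar}), where $z(T)\in E$ is finite-dimensional and $v$ solves the drift-Euler equation with $v(0)=u_0$. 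The $E$-component $z(T)$ is bounded by $\|\hat u\|_k+\|v(T)\|_k$ together with a priori bounds on $v$, and being finite-dimensional contributes only a polylogarithmic factor. The main content is to confine $v(T)$ to a bounded ball of $H_\sigma^{k+2}$, whose $H^k$-entropy on $\T^3$ is of order $\varepsilon^{-3/2}$. The time variable $T\in[0,n]$ is discretized at scale $\varepsilon$ using the Lipschitz-in-$t$ bound of Theorem \ref{T:pert}(iii), adding only polynomial factors.

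To establish the $H^{k+2}$-confinement of $v(T)$, I would approximate the data $(u_0,h)$ and the subspace $E\subset H_\sigma^k$ by $H_\sigma^{k+2}$-smooth versions; Theorem \ref{T:pert}(ii) controls the resulting perturbation in $C(J_T,H^{k-1})$, which upgrades to $H^k$ via interpolation with a priori bounds. For the smoothed problem, the drift-Euler equation with $H_\sigma^{k+2}$-data propagates $H_\sigma^{k+2}$-regularity along its flow (by commutator estimates in the Beale--Kato--Majda framework), yielding the required bound on $\|v(T)\|_{k+2}$ in terms of $n$ and the data. Covering this ball of $H_\sigma^{k+2}$ by $\varepsilon$-balls in $H^k$ and transferring the cover to $\hat u=v(T)+z(T)$ produces the claimed bound.

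The main obstacle is ensuring that the $H_\sigma^{k+2}$-regularity propagation for $v$ is uniform over attainable $\hat u\in Q$, despite 3D Euler being only locally well-posed. The uniformity must be extracted from the constraint $\hat u\in Q$ (bounded in $H^{k+\gamma}$) together with the finite-dimensionality of $E$ and the boundedness $T\leq n$; alternatively, one can exploit the time-reversibility of Euler, solving backward from $v(T)=\hat u-z(T)$ whose $H^{k+\gamma}$-regularity is inherited from $\hat u$ together with the finite-dimensional correction $z(T)\in E$, thereby circumventing forward-in-time blowup issues at the cost of managing the regularity of the backward flow.
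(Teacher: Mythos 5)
Your overall skeleton (countable decomposition of $A(u_0,h,E)$, entropy comparison, contradiction) matches the paper's, but the central step --- the upper bound on the entropy of the attainable set --- rests on a claim that fails. You propose to confine $v(T)$ to a bounded ball of $H^{k+2}_\sigma$ and cover that ball at rate $\e^{-3/2}$. There is no source of two extra derivatives here: the theorem only assumes $u_0\in H^k_\sigma$, $h\in C([0,\infty),H^k_\sigma)$ and $E\subset H^k_\sigma$, so neither the data nor the controls are smoother than $H^k$, and the drift-Euler flow does not gain regularity. Your fallback --- smoothing the data to $H^{k+2}$ and controlling the perturbation by Theorem \ref{T:pert}(ii) --- does not close the gap quantitatively: to cover at scale $\e$ you must smooth at scale $\e$, the $H^{k+2}$-radius of the smoothed data then blows up as a negative power of $\e$, and the entropy of a ball of radius $R(\e)\to\infty$ in $H^{k+2}$ is $(R(\e)/\e)^{3/2}$, which destroys the $\e^{-3/2}$ bound. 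The backward-in-time variant is circular: $v(T)=\hat u-z(T)$ only inherits the $H^{k+\gamma}$ (indeed only $H^k$, since $z(T)\in E\subset H^k$) regularity of the very ball whose non-attainability you are trying to prove. Moreover, uniform $H^{k+2}$ a priori bounds over all admissible controls on $[0,n]$ are exactly the kind of global 3D Euler estimate that is not available, as you yourself note without resolving it.

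The mechanism the paper actually uses is different and does not require any regularity gain: the smallness of the attainable set comes from the low metric entropy of the \emph{control space}, not of the states. After a Baire-category reduction to sets $D_{p,m}$ of controls bounded in $W^{1,1}(J_{t_p},H^k)\times E$ (you also omit this step --- bounding only $T\le n$ leaves the control norms unbounded, and one needs closedness of each piece to pass from ``small entropy'' to ``contains no ball''), one uses that a bounded set of $E$-valued $W^{1,1}$ functions with $E$ finite-dimensional has entropy $\prec\e^{-1}\ln\e^{-1}$ in $L^1(J,E)$, as in (\ref{E4.pr3}), and pushes this forward through the resolving operator, which by Theorem \ref{T:pert} is Lipschitz into $H^{k-1}$ (not $H^k$ --- this is why the paper compares entropies in the $H^{k-1}$ norm, where a ball of $H^{k+\gamma}_\sigma$ has entropy $\e^{-3/(1+\gamma)}$ with $3/(1+\gamma)>1$ precisely for $\gamma<2$). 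Unless you can supply a genuine regularity-propagation argument with uniform bounds, your route does not go through, whereas the control-entropy route avoids the issue entirely.
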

\begin{proof}We argue by contradiction.
 Suppose that $ A(u_0,h,E)$ contains a closed ball $Q\subset H_\sigma^{k+\gamma}$.
  Let   $\{t_l\}$ be a dense sequence in $[0,\infty)$ and let
\begin{align}
&D_{l,n}:=\{(z,y)\in W^{1,1}(J_{t_l},H_\sigma^k)\cap
\Theta_{t_l}(u_0,h)\times E :\,
 \|z\|_{W^{1,1}(J_{t_l},H^k)}\le n, \|y\|_k\le n\},\nonumber\\
&B_{l,n}:=\{ \hat u \in H_\sigma^k  :\, \hat u = \RR_t(u_0,z,h)+y
\,\,\text{for some} \,\, (z,y)\in D_{l,n} , t\in[0,t_l]
\}.\nonumber
\end{align}
It is easy to see that $\bigcup_{l,n} B_{l,n}\supset A(u_0,h,E)$. By the
Baire theorem, there  are integers $p$ and $m$ such that $B_{p,m}$
is dense in a ball $\hat Q$ with respect to the metric of
$H_\sigma^{k+\gamma}$. Let us denote by $K:[0,\infty)\times
L^1([0,\infty),E) \times E\rightarrow H_\sigma^{k-1}$ the
continuous operator that takes the triple $(t,z,y)\in
J_{t_p}\times D_{p,m}$ to $\RR_t(u_0,z,h)+y$. As $K(J_{t_p}\times
D_{p,m})\subset B_{p,m}$ is closed in $H_\sigma^{k+\gamma}\cap
B_{p,m}$, then $\hat Q \subset B_{p,m} $. We have from \cite {edm}
\begin{equation}\label{E4.pr2}
H_\e(Q,L^2)\sim \bigg( \frac{1}{\e}\bigg)^{\frac{3}{k}},
\end{equation}
where $Q$ is a ball in $ H^k$. To obtain (\ref{E4.pr2}) for any
$Q\subset H_\sigma^{k}$, we follow the ideas of  \cite
[Proposition 2.2]{shi3}. Let us denote by $\Sigma^k$ the closure
in $H^k$ of the set of functions  $u=(\p_2v,-\p_1v,0)\in H^{k}$,
where $v \in H^{k+1}$ is a scalar function. Since $\Sigma^k$ is
closed subspace of $H_\sigma^{k}$, it suffices to prove
(\ref{E4.pr2}) any  ball $Q\subset \Sigma^k$. Let us introduce the
set of scalar functions
$$\dot{H}^k(\T^3):=\{u\in H^{k}(\T^3): \int_{0 }^{2\pi}u(x_1,x ') \dd x_1=0 \text{ for any $x'\in \T^2$} \}.$$
As
$$H^k(\T^3)=\dot{H}^k(\T^3)\dot + H^k(\T^2),$$
 we get  (\ref{E4.pr2}) for any ball $Q\subset \dot{H}^k(\T^3)$.
Finally, if $\Pi_2$ is the projection
$\Pi_2(u_1,u_2,u_3)\rightarrow u_2$, then $\Pi_2\Sigma^k=
\dot{H}^k(\T^3)$. Thus  (\ref{E4.pr2}) holds for any $Q\subset
H_\sigma^{k}$. Hence
\begin{equation}\label{E4.pr21}
H_\e( Q,H^{k-1})\sim \bigg( \frac{1}{\e}\bigg)^{\alpha},
\end{equation}
where $Q$ is a ball in $H_\sigma^{k+\gamma}$ and
$\alpha=\frac{3}{1+\gamma}>1$. On the other hand,  from
\cite[(3.10)]{shi3} it follows that
\begin{equation}\label{E4.pr3}  H_\e\bigg(J_{t_p}\times
D_{p,m},\R\times  L^1(J,E)\times E \bigg)\prec
\frac{1}{\e}\ln\frac{1}{\e}.
\end{equation}
As $h\in C([0,\infty), H_\sigma^k)$, by Theorem \ref{T:pert}, the
operator  $K:J_{t_p}\times D_{p,m}\rightarrow H^{k-1}$ is
Lipschitz-continuous. Then  (\ref{E4.pr3}) implies
\begin{equation} H_\e(B_{p,m},H^{k-1}) \prec
\frac{1}{\e}\ln\frac{1}{\e}.
\end{equation}
Combining this with relation (\ref{E4.pr21}), we see that
$$ H_\e(\hat Q,H^{k-1})\succ \e^\nu H_\e(B_{p,m},H^{k-1}),$$
where $\nu>0$, which contradicts the inclusion $\hat Q \subset
B_{p,m} $.
\end{proof}

\bibliographystyle{plain}

\end{document}